\DeclareSymbolFont{extraup}{U}{zavm}{m}{n}
\DeclareMathSymbol{\varheart}{\mathalpha}{extraup}{86}
\DeclareMathSymbol{\vardiamond}{\mathalpha}{extraup}{87}
\numberwithin{equation}{subsection}
\begin{document}
\typearea{18}
\title{Green's function in general relativity}
\author{Yoshimasa Kurihara\footnote{yoshimasa.kurihara@kek.jp}\\
{\small 
High Energy Accelerator Organization (KEK), 
Tsukuba, Ibaraki 305-0801, Japan}
}
\date{\today}
\maketitle

\begin{abstract} 
This report provides Green's functions (classical propagators) of gravitational fields appearing in general relativity.
The existence of Green's function of the wave equation in curved space with an indefinite metric is ensured owing to the Hodge harmonic analysis.
The analyticity of Green's function is determined intrinsically to keep a causality.
This report proposed a novel definition of momentum space in curved space-time and the linearisation of the Einstein equation as a free field consistent with that of the Yang--Mills gauge field.
The proposed linearisation does not utilize the weak-field approximation; thus, the method applies to highly caved space-time.
We gave two examples of Green's function of gravitational fields: the plane wave solution and the Schwarzschild solution.
\end{abstract}

 \tableofcontents
  
\section{Introduction}
General relativity treats the gravitational force classically and is supported by various experiments, including recent observations of the gravitational wave
General relativity is also the Yang-Mills type gauge theory with the global $GL(4)$ and the local $SO(1,3)$ gauge symmetry, as first Utiyama has clarified\cite{PhysRev.101.1597}.
A perturbative renormalisation theory of gravity has not yet been established, in contrast to the Yang--Mills theory with the compact Lie group.
Although various attempts have been made for nearly one hundred years, a satisfactory theory of quantum gravity still needs to be completed.
In the absence of full quantum gravity, studying quantum field theory in curved space-time provides a tool to investigate the quantum effect of gravity.
Fundamental and exciting phenomena, e.g., the Hawking radiation\cite{1974Natur.248...30H}, the Unruh effect\cite{PhysRevD.14.870} and others, are found owing to the quantum field theory in curved space-time.
The progress and results of studies are summarised in many books and lectures\cite{Birrell:1982ix,Wald:1995yp,Ford:1997hb,mukhanov2007introduction,61354,Parker:2009uva}.

In 1956, Utiyama\cite{PhysRev.101.1597} first pointed out that general relativity is a Yang-- Mills-type gauge theory\cite{PhysRev.96.191}.
We refer to the Yang--Mills theory, including general relativity as the local $SO(1,3)$ gauge theory, as the \textit{Yang--Mills-Utiyama} (\YMU\!) \textit{theory}.
According to this point of view, vierbein and spin-connection fields are natural counterparts of matter and gauge fields\cite{Kurihara_2020} in the \YMU theory.
The current author has provided a full geometrical Lagrangian of the \YMU theory\cite{Kurihara:2025tro} and proposed the canonical quantisation of Einstein's gravity\cite{doi:10.1140/epjp/s13360-021-01463-3} as the gauge theory of vierbein and spin-connection. 
In that study, the author quantised the vierbein field using the Heisenberg picture without a weak-field approximation around a flat metric and did not apply a perturbation approach; thus, perturbative propagators of vierbein and spin-connection fields are not given there.
This report's primary objective is to provide both perturbative and exact Green's functions of vierbein and spin-connection fields.
In addition, we propose a novel linearisation of the Einstein equation for curved space-time for perturbation.

Green's function, given in the momentum space obtained using the Fourier transformation, is the propagator of quantised fields.
An integration kernel of the Fourier transformation is the fundamental solution of the wave equation in the flat Minkowski space.
The wave equation is the equation of motion for the quantised free field, and its solution forms an orthonormal system; thus, the propagator provides the perturbative solution of the interacting equation of motion expanded by the free-field solutions.
On the other hand, propagators of the Yang-Mills gauge and spinor fields are not trivial in curved space-time.
When background space-time is not a flat Minkowski space, an equation of motion for the free (no gauge-interacting) field is not the wave equation, and the physical meaning of the momentum space is unclear.
The second objective of this report is to clarify the physical meaning of the Fourier transformation and the momentum space in curved space-time. 
Moreover, we also discuss the existence of Green's function in Minkowski space-time. 

The structure of this report is as follows:
After this introduction, we present mathematical preparations of inertial space-time and extended Riemannian space, namely the amphometric space,  in \textbf{section 2}.
In \textbf{section 3}, we discuss the existence of Green's function in an indefinite metric space.
In curved space-time, the Fourier transformation and momentum space are not trivial and have a different structure from those in the flat Minkowski space.
In \textbf{section 4}, we define the Fourier transformation, configuration space, and momentum space in curved space-time.
\textbf{Section 5} provides the main results of this report, perturbative and exact Green's functions of vierbein and spin-connection fields.
Finally, \textbf{section 6} summarises results and discussions.

We use the following physical units in this study: values of speed of light $c$, reduced Plank-constant $\hbar=h/2\pi$, and Einstein gravitational constant $\kappa$ are set to unity $c=\hbar=\kappa=1$, whereas $\hbar$ and $\kappa$ are written explicitly in formulae to clarify the physical dimension of variables.
This unit system provides physical dimensions of fundamental constants as $\fdl\hbar\fdr=LE=TM$, $\fdl\hbar\hspace{.1em}\kappa\fdr=L^2=T^2$ and $\fdl\hbar/\kappa\fdr=E^2=M^2$, where $L$, $T$, $E$ and $M$ are, respectively, length, time, energy and mass dimensions.

We treat both $SO(1,3)$ and $SO(4)$ symmetries simultaneously using the one-parameter family of metric tensors, namely the \textit{amphometric}\cite{Kurihara:2025tro}.
We denote a four-dimensional special-orthogonal group as $SO_4$ when we do not distinguish its signature.

\section{Inertial space with amphometric}\label{intro}
This section summarises the mathematical tools utilised throughout this study.
%
%
\subsection{Inertial bundle}\label{Inertial bundle}
We introduce four-dimensional Riemannian manifold $(\MM,\bm{g})$, where $\MM$ is a smooth and oriented four-dimensional manifold, namely the \emph{global space-time manifold}, and $\bm{g}$ is a metric tensor in $\MM$.
Metric tensor $\bm{g}$ is the non-degenerate symmetric tensor; thus, it has four eigenvectors.
We assume $\bm{g}$ has one positive and three negative eigenvalues. 
In an open neighbourhood around $p$, $U_{\hspace{-.1em}p{\in}\MM}\subset\MM$, we introduce the standard coordinate $x^\mu$; orthonormal bases in $T\MM_{\hspace{-.1em}p}$ and $T^*\MM_{\hspace{-.1em}p}$ are represented $\partial/\partial x^\mu$ and $dx^\mu$ using the standard coordinate, respectively.
We use the abbreviation $\partial_\mu:=\partial/\partial x^\mu$ throughout this report.
They are dual bases to each other such that $dx^\mu\cdot\partial_\nu=\delta^\mu_\nu$, where a \emph{dot product} $\xxx{\cdot}y$ means a contraction of a form $\xxx$ in $\TsMM_{\hspace{-.1em}p}$ concerning a vector $y$ in $\TMM_{\hspace{-.1em}p}$.
Two trivial vector bundles $\TMM:=\bigcup_{p} T\MM_{\hspace{-.1em}p}$ and $\TsMM:=\bigcup_{\hspace{-.1em}p} \TsMM_{\hspace{-.1em}p}$ are referred to as  a tangent and cotangent bundles in $\MM$, respectively.

The Levi-Civita connection in $\MM$ is defined using a metric tensor as
\begin{align*}
\Gamma^\lambda_{~\mu\nu}(x):=\frac{1}{2}g^{\lambda\sigma}(x)\(
\partial_\mu g_{\nu\sigma}(x)+\partial_\nu g_{\mu\sigma}(x)-\partial_\sigma g_{\mu\nu}(x)
\),
\end{align*}
where $g_{\mu\nu}:=[\bm{g}]_{\mu\nu}$.
An inertial system in which the Levi-Civita connection vanishes such that $\Gamma^\lambda_{~\mu\nu}(x)=0$ exists locally at any point $x\in\MM$.
We define an inertial system more precisely as follows:
\begin{definition}\label{InrSys}\textup{(Inertial system\cite{fre2012gravity})}\\
When a coordinate system $\xi^\mu$ in $U_{\hspace{-.1em}p}$ fulfils the following conditions, $\xi^\mu$ is called the inertial system: 
\begin{enumerate}
\item An origin of $\xi^\mu$ is at $p$ such that $\xi^\mu(x)\big|_{x{\rightarrow}p}=0$.
\item The metric tensor concerning the $\xi^\mu$ coordinate at $p$ is ${\bm{\eta}}(x)\big|_{x{\rightarrow}p}=\textup{diag}[1,-1,-1,-1]$.\label{2}
\item The Levi--Civita connection concerning the $\xi^\mu$ coordinate at $p$ vanishes such that $\Gamma^\lambda_{~\mu\nu}(x)\big|_{x{\rightarrow}p}=0$.\QED
\end{enumerate}
\end{definition}
\noindent
The existence of the inertial system at any point in $\MM$ is referred to as \textit{Einstein's equivalence principle} in physics.
A manifold equipping an inertial frame at point $p\in\MM$ is denoted as $\M_{p}$, namely ``a local inertial manifold at a point $p$''.
A four-dimensional \textit{rotation} (the Lorenz transformation) is isometry $\Lambda_{p}:\text{End}(\M_{p})$ concerning a local $SO(1,3)$ group.
A principal inertial bundle is defined as a tuple such that:
\[
\(\M,\pi_\textsc{i},\MM,SO(1,3)\).
\]
A total space $\M$, namely an \emph{inertial manifold}, is a trivial bundle of quotient spaces such that:
\begin{align}
\M:=\bigcup_{p\in\M}\M_{p}/\Lambda_{p},\label{Lmani}
\end{align}
and the bundle map is a smooth map such that: 
\begin{align*}
\pi_\textsc{i}:\M\otimes\MM\rightarrow\MM:\M_{p}\mapsto{p}.
\end{align*}
An orthonormal basis on $\TM_{p}$ is represented as 
\begin{align*}
\partial_a:=\frac{\partial~}{\partial\xi^a(p)}\in V(\TM_p),
\end{align*}
where $\xi^a$ is the standard basis at $p$ in the open neighbourhood $U_{p{\in}\M}$.
In other words, $\xi^a(p)$ is a function mapping a point $p\in\M$ to a frame vector $\xi^a$ at $p$, and frame vector $x^\mu(p)$ is a function mapping a point $p\in\MM$ to the vector $x^\mu$.
We also use abbreviations such as
\begin{align*}
x^\mu(\xi)&:=x^\mu\(\pi_p\(\xi^{-1}\(p\in\M\)\)\)\in\V(\TMM_p),\quad
\xi^a(x):=\xi^a\(\pi_p^{-1}\(x^{-1}\(p\in\MM\)\)\)\in\V(\TM_p),
\intertext{where}
\pi_{p}&:=p\in\M{\mapsto}p\in\MM\!.
\end{align*}

As for suffixes of vectors in $\TM_{p}$, Roman letters are used for components of the standard basis throughout this study; Greek letters are used for them in $\MM_{p}$.
Owing to the convention, we can distinguish two abbreviated vectors such that $\partial_\mu\in V(\TMM)$ and $\partial_a\in V(\TM)$.
The metric tensor in $\TsM_{p}$ is $\bm{\eta}$ in \textbf{Definition  \ref{InrSys}}-\textit{\ref{2}}.
The Levi-Civita tensor (complete anti-symmetric tensor) $\bm\epsilon$, whose component is $[\bm\epsilon]_{0123}=\epsilon_{0123}=+1$, and $\bm\eta$ are constant tensors in $\TsM:=\bigcup_{p\in\M}\TsM_p$.
Later in this study, they are extended to the amphometric.

A pull-back of the bundle map induces a one-form object $\eee^a\in\Gamma(\TsM_{p},\M)$ represented using the standard basis as
\begin{align}
\pi_\textsc{i}^\#&:
\Omega^1(\TsMM_{\hspace{-.2em}p})\rightarrow\Gamma(\TsM_{p},\M):
dx^\mu\mapsto
\eee^a\(\xi(p)\):=d\xi^a(p),\label{pull}
\end{align}
where $\Omega^n(\TsMM_{\hspace{-.1em}p})$ is a space of differential $n$-forms in $\TsMM_{\hspace{-.1em}p}$ and $\Gamma$ is a space of sections.
We define the vierbein $\E^a_\mu(x)\in{C^\infty(\MM)}$ as a matrix representation of the map (\ref{pull}) such that:
\begin{align*}
\eee^a\(\xi(p)\)=
\pi_\textsc{i}^\#\(dx^\mu(p)\)&:=
\E^a_\mu(x)dx^\mu(p\in\MM_p)
=d\xi^a(p\in\M_p). 
\end{align*}
The vierbein is a smooth and invertible function globally defined in $\MM$.
Throughout this report, we use a Fraktur letter to represent a $p$-form object defined in the cotangent bundle\footnote{A Fraktur letter is also used for Lie-algebra.}.
The Einstein convention for repeated indexes (one in up and one in down) is exploited throughout this study.

The vierbein inverse $[\E^{-1}]_a^\mu=\E_a^\mu(\xi)\in{C^\infty(\M)}$, which is also called the vierbein, is an inverse transformation such that:
\begin{align*}
\E^{-1}&:\Omega^1(\TsM_p)\rightarrow\Omega^1(\TsMM_p):
\eee^a(\xi(p))\mapsto dx^\mu(p)=\E_a^\mu(\xi)\eee^a\(\xi\)\(p\in\MM\),
\end{align*}
yielding
\begin{align*}
\E^a_\mu(x)\E_a^\nu(\xi(x))&:=\delta^\nu_\mu
\quad\text{and}\quad
\E_a^\mu(\xi)\E^b_\mu(x(\xi)):=\delta_a^b.
\end{align*}
Owing to Definition \ref{InrSys}-\textit{\ref{2}}, metric tensors in $\MM$ and those in $\M$ are related to each other as
\begin{align*}
g^{~}_{\mu\nu}(x)&=\E^\bcdot_\mu(x)\E^\bcdot_\nu(x)\hspace{.1em}\eta^{~}_\bcdots(\xi(x))
\quad\text{and}\quad
\eta^{~}_{ab}(\xi)=\E_a^\mu(\xi)\E_b^\nu(\xi)\hspace{.1em}g^{~}_{\mu\nu}(x(\xi)).
\end{align*}
Dummy Roman indexes are often abbreviated to a small circle $\bcdot$ (or $\star$) when the dummy-index pair of the Einstein convention is obvious as above.
When multiple circles appear in an expression, the pairing must be on a left-to-right order at upper and lower indexes.
The same map yields a push-forward of a vector in $\TM$ for $\TMM$ such that: 
\begin{align*}
{\pi_\textsc{i}^\#}^{-1}=:{\pi_\textsc{i}}_{\#}&:
V^1\(\TM_p\){\rightarrow}V^1\(\TMM_p\):
\frac{\partial~}{\partial\xi^a(p)}\mapsto
\frac{\partial~}{\partial x^\mu}:=\E^a_\mu(x)
\frac{\partial~}{\partial \xi^a(x)}.
\end{align*}
We use simplified representations, e.g., $\E_a^\mu$ and $\E^a_\mu$ instead of $\E_a^\mu(\xi)$ and $\E^a_\mu(x)$ when its domain is obvious owing to their suffix.

One-form object $\eee^a$ is referred to as the vierbein form.
The vierbein form provides an orthonormal basis in $\TsM$ and it is a dual basis of $\partial_a$ such as $\eee^a\cdot\partial_b=\E^a_\mu\E^\nu_b{dx}^\mu\cdot\partial_\nu=\delta^a_b$ and is a rank-one tensor (vector) in $\TM$.
The four-dimensional volume form is represented using vierbein forms as 
\begin{align}
\vvv&:=\frac{1}{4!}\epsilon_{\bcdots\bcdots}\hspace{.1em}\eee^\bcdot\wedge\eee^\bcdot\wedge\eee^\bcdot\wedge\eee^\bcdot
=\deteps\hspace{.2em}dx^0{\wedge}dx^1{\wedge}dx^2{\wedge}dx^3.\label{volumeF}
\end{align}
Similarly, the two-dimensional surface form is defined as
\begin{align*}
\SSS_{ab}:=\frac{1}{2}\epsilon_{ab\bcdots}\eee^\bcdot\wedge\eee^\bcdot.
\end{align*}

Connection one-form $\www$ concerning the $SO(1,3)$ group, namely the spin-connection form,  is introduced.
We define a $SO(1,3)$-covariant differential for one-form object $\aaa\Omega^1$ using the spin-connection form as
\begin{align*}
d_\www\aaa^a:=&d\aaa^a+\cG\www^a_{~\bcdot}\wedge\aaa^\bcdot,\\
\intertext{where}
\www^a_{~b}:=&
\eta_{b\bcdot}\hspace{.1em}\omega^{~a\bcdot}_\mu\hspace{.1em}{dx}^\mu=
\eta_{b\bcdot}\hspace{.1em}\omega^{~a\bcdot}_\mu\hspace{.1em}\E^\mu_\star\eee^\star=
\eta_{b\bcdot}\hspace{.1em}\omega^{~a\bcdot}_\star\hspace{.1em}\eee^\star,
\end{align*}
and $\omega^{~ab}_{\mu}$ is a component of the spin-connection form using the standard basis.
Raising and lowering indexes are done using a metric tensor.
Two-form object 
\begin{align*}
\TTT^a:=d_\www\eee^a\in V^1(\TM)\otimes\Omega^2 (\TsM)
\end{align*}
is referred to as a torsion form.
A curvature two-form is defined owing to the structure equation as
\begin{align*}
\RRR^{ab}&:=d\www^{ab}+\cG\www^a_{~\bcdot}\wedge\www^{\bcdot b}\in V^2(\TM)\otimes\Omega^2 (\TsM).
\end{align*}
The first and second Bianchi identities are
\begin{align}
d_\www\TTT^a&=d_\www(d_\www\eee^a)=\cG\hspace{.2em}
\eta_{\bcdots}\RRR^{a\bcdot}\wedge\eee^{\bcdot}
\quad\text{and}\quad
d_\www\RRR^{ab}=0.\label{Bicnchi2}
\end{align}
The curvature form is a two-form valued rank-$2$ tensor that has a coordinate representation using the standard basis in $\TsM$ as
\begin{align*}
\RRR^{ab}&=
\sum_{c<d}\Ri^{ab}_{\hspace{.7em}cd}\hspace{.1em}\eee^{c}\wedge\eee^{d}=
\frac{1}{2}\Ri^{ab}_{\hspace{.7em}\bcdots}\hspace{.1em}\eee^\bcdot\wedge\eee^\bcdot.
\end{align*}
Tensor coefficient $R^{ab}_{\hspace{.7em}cd}$ is referred to as the Riemann-curvature tensor.
Ricci-curvature tensor and scalar curvature are defined, respectively, owing to the Riemann-curvature tensor as
\begin{align*}
R^{ab}:=\Ri^{\bcdot a}_{\hspace{.7em}\bcdot \star}\eta^{b\star}~~\textrm{and }~~
R:=\Ri^{\bcdot\star}_{\hspace{.7em}\bcdot\star}.
\end{align*}

We summarise a physical dimension of form objects appearing in this section:
We assign a length dimension to coordinate vectors $dx^\mu$ and $\eee^a$: $\fdl dx^\mu\fdr=\fdl \eee^a\fdr=L$ and set a vierbein to null dimension: $\fdl \E\fdr=1$.
On the other hand, the spin-connection has a dimension $\fdl \omega\fdr=L^{-1}$; thus, the spin-connection one-form has a null dimension $\fdl \www\fdr=1$.
An external derivative operator has a null dimension such that:
\begin{align*}
df=\frac{\partial f}{dx^\bcdot}\eee^\bcdot&\implies
\fdl{d}\fdr
\fdl{f}\fdr=\left[\frac{\partial f}{\partial{x}^\bcdot}\eee^\bcdot\right]=\fdl{f}\fdr L^{-1}L=\fdl{f}\fdr
\implies\fdl{d}\fdr=1.
\end{align*}
Two-form objects have a physical dimension such that: $\fdl \SSS\fdr=L^2$, $\fdl \TTT\fdr=L$ and $\fdl \RRR\fdr=1$.
Therefore, the Riemann curvature tensor has a physical dimension of $\fdl R\fdr=L^{-2}$.
A volume form has physical dimension of $\fdl \vvv\fdr=L^4$ and a coupling constant has a null dimension.

%
%
\subsection{Amphometric space}\label{q-metric}
The current author first introduced the amphometric tensor, denoted as $\bm\eta^{~}_\theta$  in Ref.\cite{Kurihara:2025tro}, which is the one-parameter family of the complex symmetric $n$-dimensional tensor in the $n$-dimensional smooth manifold.
The amphometric tensor coincides with the Euclidean metric tensor at $\theta=0$ and the Lorentzian metric at $\theta=\pm1$.
When two functions, e.g., the solutions of a differential equation in Euclidean and Lorentzian metric spaces, are homotopical equivalence through parameter $\theta$, we can import some properties of one function in the Euclidean metric space into another function in the Lorentzian metric space.
We utilise the amphometric space to ensure the existence of a propagator in the Lorentzian metric space owing to it in the  Euclidean metric space. 
This subsection introduce the amphometric space.

The amphometric space is pair $(\Mg,\bm\eta^{~}_\theta)$, where  $\Mg$ is the four-dimensional smooth and oriented manifold and $\bm{g}^{~}_\theta$ is a complex symmetric tensor.
We introduce complex symmetric tensor $\bm{\eta}^{~}_\theta$, namely the amphometric, and its inverse in $\M$ such that:
\begin{align*}
\left[\bm\eta^{~}_\theta\right]_{ab}&:=\textup{diag}\(1,e^{i\pi\theta},e^{i\pi\theta},e^{i\pi\theta}\),\quad
\left[\bm\eta^{-1}_\theta\right]^{ab}:=\textup{diag}\(1,e^{-i\pi\theta},e^{-i\pi\theta},e^{-i\pi\theta}\),
\intertext{yielding}
\bm\eta^{~}_\theta&=
\begin{cases}
\bm\eta^{~}_\textsc{e}:=\text{diag}[1,1,1,1],&\theta=\hspace{.8em}0\\
\bm\eta^{~}_\textsc{l}:=\text{diag}[1,-1,-1,-1],&\theta=\pm1
\end{cases},
\end{align*}
where subscripts ``\textsc{e}'' and ``\textsc{l}'' stand for Euclidean and Lorentzian, respectively.
For simplicity, the Lorentzian metric tensor is also denoted with no-subscript as $\bm\eta$.
We define the determinant square root of the amperometric tensor as
\begin{align*}
\sigma_{\hspace{-.1em}\theta}^\hlf&:=\text{det}[\bm\eta^{~}_\theta]^\hlf{\hspace{.1em}=\hspace{.1em}}e^{3i\pi\theta/2}\!,
\end{align*}
where a real part of $\sigma_{\hspace{-.1em}\theta}^\hlf$ can be negative.

The amphometric tensor defined in locally flat manifold $\M^{~}_\theta$ is lifted to global manifold $\MM^{~}_\theta$ curved in general, such that: 
\begin{align*}
\left[\bm{g}^{~}_\theta\right]_{\mu\nu}:=
\left[\bm\eta^{~}_\theta\right]_\bcdots
\E^\bcdot_\mu\hspace{.1em}\E^\bcdot_\nu,
\quad\text{yielding}\quad
ds^2:=\left[\bm{\eta}^{~}_\theta\right]^{~}_{\bcdots}
\hspace{.1em}\eee^\bcdot\otimes\eee^\bcdot
=\left[\bm{g}^{~}_\theta\right]_{\mu\nu}\hspace{.1em}dx^\mu{\otimes}\hspace{.1em}dx^\nu\!.
\end{align*}
We write, hereafter, elements of the metric tensor and its inverse as $\left[\bm\eta^{~}_\theta\right]_{ab}={\eta^{~}_{\theta}}_{ab}$  and $\left[\bm\eta^{-1}_\theta\right]^{ab}={\eta^{~}_\theta}^{ab}$, respectively.
For two real tangent vectors 
\begin{align*}
\bm{u}&=(u^0,u^1,u^2,u^3),~\bm{v}=(v^0,v^1,v^2,v^3)\in{V(\TM_\theta)},
\intertext{the amphometric tensor provides an inner product of two tangent vectors as}
\<\bm{u}|\bm{v}\>^{~}_\theta&:=
{\eta^{~}_{\theta}}_\bcdots\hspace{.1em}\bm{u}^{\bcdot}\hspace{.1em}\bm{v}^{\bcdot}=
u^0v^0+e^{i\pi\theta}\(u^1v^1+u^2v^2+u^3v^3\)\in\C,
\end{align*}
which is invariant under $SO_4$ at any fixed $\theta$-value.
In Ref.\cite{Kurihara:2025tro}, we introduced function $\kappa(\theta):=e^{i\pi(\theta-1)/2}$.
This study utilises a modified one as
\begin{align*}
\hat{\kappa}(\theta)&:=e^{i\pi\theta/2}=
\begin{cases}
\hspace{.8em}1&\theta=\hspace{.8em}0\\
\pm1&\theta=\pm1
\end{cases}.
\end{align*}

When we allow all $SO_4$ group actions at $\theta\neq\pm1$, they violate the causality at $\theta=\pm1$.
We introduce a subgroup of $\Ca(2)$, namely the causal subgroup,  preserving the causality of the physical phenomenon in the Lorentzian metric space.
The restriction of group actions in the causal subgroup corresponds to choosing one axis as the ``time'' axis and prohibiting mixing the ``time'' direction with the ``spatial'' directions in Euclidean space.
\begin{definition}[Causal subgroup]\label{Defamphoortho}
Suppose transformation $\bm\Lambda^{~}_c(\theta){\hspace{.1em}\in\hspace{.1em}}\Ca(2)\subset{SO_4}$ fulfils 
\begin{align*}
&\bm\Lambda^{~}_c(\theta=\pm1):=\bm\Lambda^{~}_b\in\SO^{\uparrow}\hspace{-.1em}(1,3) {\subset}SO(1,3)
~~\text{and}~~
\bm\Lambda^{~}_c(\theta=0):=\bm\Lambda^{~}_r{\hspace{.1em}\in\hspace{.1em}}SO_{~}^{\perp}\hspace{-.1em}(3){\subset}SO(4),
\end{align*}
where $\SO^{\uparrow}\hspace{-.1em}(1,3)$ is the subgroup of $SO(1,3)$ with positive time (future) and $SO_{~}^{\perp}\hspace{-.1em}(3)$ is the three-dimensional rotation in the sub-manifold perpendicular to the time axis.
It acts on the amphovector as
\begin{align*}
&\bm\Lambda^{~}_c:End\(\Ca(2)\):\v\mapsto\v':=\bm\Lambda^{~}_c\v.
\end{align*}
We call $\bm\Lambda^{~}_c$ as the causal transformation and $\Ca(2)\subset{SO_4}$ as the causal subgroup.
\QED
\end{definition}
\noindent
E.g., a following transformation is the $\Ca(2)$ group action in the amphometric space:
%
%
\begin{align*}
\bm\Lambda^{~}_c&:=
\bm\Lambda^{~}_b(\theta p^{~}_z)\bcdot\hspace{.1em}\bm\Lambda^{~}_r(\vartheta_1,\vartheta_2,\vartheta_3),
\end{align*}
where $\bm\Lambda^{~}_b(\theta p^{~}_z)$ is the Lorentz boost along the $z$-axis and $\bm\Lambda^{~}_r(\vartheta_i)$ is rotation with three Euler-angles $\vartheta_i$.
The boost operator disappears at $\theta=0$.

In the amphometric space, we can define spinor, vector, tensor and other physical fields and construct a consistent classical field theory.
Details of the amphometric space and Yang--Mills theory extended into the amphometric space are described in Ref.\cite{Kurihara:2025tro}.
%
%
\section{Harmonic analysis}\label{Hanaly}
 In the Euclidean space, the harmonic analysis ensures solutions of the Poisson equation owing to Green's function.
On the other hand, the universe has the metric with the Lorentzian signature.
Our strategy to treat the Poisson equation with the Lorentzian metric is to extend the harmonic analysis from the Euclidean space to the amphometric space and then investigate homotopy equivalence between solutions of the Poisson equation in the Euclidean and Lorentzian metric spaces.
We discuss the existence of Green's function in the Lorentzian metric space embedded in one end of the amphometric space; the Euclidean space is bedded at another end. 

This section first provides preliminaries of harmonic analysis in the Euclidean space, then extends it to the amphometric space $(\Mg,g^{~}_\theta)$, where $\Mg$ is an $n$-dimensional smooth, compact, oriented manifold equipping the amphometric tensor.

\subsection{Preparation}
Suppose $H_1$ and $H_2$ are the Hilbert space defined in $\TMg$, and ${\cal O}:H_1{\rightarrow}H_2$ is a linear operator.
In vector space $V(\TMg)\supset H_1,H_2$, we introduce the bilinear form $\bm{g}^{~}_\theta$ using the standard basis  as $[\bm{g}^{~}_\theta]_{ab}=g^{~}_\theta(\partial_a,\partial_b)$.
When $\zeta{\in}H_1$ exists for any $\xi{\in}D({\cal O})\subset{H_1}$\footnote{We denote a domain of operator $\bullet$ as $D(\bullet)$.} fulfiling
\begin{align*}
\left.\left.g^{~}_\theta\({\cal O}(\xi),\chi\)\right|_{H_2}=g^{~}_\theta\(\xi,\zeta\)\right|_{H_1}\!,
\end{align*}
then, $\zeta$ is uniquely provided for a given $\chi{\in}H_2$.
An \textit{adjoint operator} of ${\cal O}$ is defined  as 
\begin{align*}
\widehat{{\cal O}}:H_2{\rightarrow}H_1:\xi\mapsto\widehat{\cal O}(\xi),~~\text{yielding}~~
\left.\left.g^{~}_\theta\(\xi,\widehat{{\cal O}}(\chi)\)\right|_{H_1}=g^{~}_\theta\({\cal O}(\xi),\chi\)\right|_{H_2}\!.
\end{align*}
When $H_1=H_2=H$ and ${\cal O}(\xi)=\widehat{{\cal O}}(\xi)$ for any $\xi\in{H}$, operator ${\cal O}$ is called a \textit{self-adjoint}.

Next, we introduce the adjoint-operator in the space of $p$-form objects in $\TsMg$.
An orthonormal basis of $p$-forms in $n$-dimensional cotangent space $\TsMg$ consists of $n!/p!(n-p)!$ independent $p$-form objects in total.
We represent them using a short-hand notation as 
\begin{align*}
\eee^{i_1{\cdots}i_p}:=\eee^{i_1}\wedge\cdots\wedge \eee^{i_p}\in\Omega^p(\TsMg),
~\text{where}~
i_1<i_2<\cdots<i_p,~
i_\bullet\in\{0,\cdots,n-1\},
\end{align*}Suppose $p$-form objects $\aaa,\bbb\in\Omega^p(\TsMg)$ are expressed using this basis as
\begin{align*}
\aaa=\sum_{i_1<\cdots<i_p}a_{i_1\cdots i_p}(x)\hspace{.1em}\eee^{i_1{\cdots}i_p}=:
a_{i_{1:p}}(x)\hspace{.1em}\eee^{i_{1:p}}\!,
\end{align*}
and the same manner for $\bbb$, where multi-index $i_1{\cdots}i_p$ runs over all possible combinations of $i^{~}_\bullet$. 
In the above representation, multi-indices $i_1{\cdots}i_m$ for any $n{\geq}m\in\N$ are denoted as $i_{1:m}$.
When we use this multi-index representation, we take summations in the Einstein convention over all possible combinations of $i_1<i_2<\cdots<i_m$ avoiding a factor $1/m!$.

We define an $SO_4$ invariant bilinear form for two $p$-form objects as
\begin{align*}
&{g^{~}_\theta}(\aaa,\bbb)=
\<{\aaa|\bbb}\>^{~}_\theta:=
\eta_\theta^{\hspace{.2em}i_1j_1}\cdots\eta_\theta^{\hspace{.2em}i_pj_p}\hspace{.1em}
a^{~}_{i_{1:p}}\hspace{.1em}b^{~}_{j_{1:p}}=\<{\bbb|\aaa}\>^{~}_\theta\in\C.
\end{align*}
Direct calculations show $SO_4$ invariance of the bilinear form.
A square-root of the bilinear form with itself,
\begin{align*}
\<\aaa\>^{~}_\theta:={\<\aaa|\aaa\>^{~}_\theta}^{\hlf}\in\C,
\end{align*}
is called the \textit{pseudo norm} in this study.
We introduce the $L^2$-norm, too, as
\begin{align}
\|\aaa\|_2&:=\(\sum_{i_{1:p}}\left|a_{i_{1:p}}\right|^2\)^\hlf\!\in\R.\label{L2norm1}
\end{align}
The pseudo norm and the $L^2$-norm are equivalent when $\theta=0$ (Euclidean metric); thus, we also denote them as 
\begin{align*}
\<{\aaa|\bbb}\>^{~}_0=\<{\aaa|\bbb}\>^{~}_\textsc{e}=\|\aaa\|^{~}_2=\|\aaa\|^{~}_\textsc{e}.
\end{align*}

\begin{definition}
\textup{(Closed and Pre-closed Operators\cite{yosida2012functional})}\label{closedOp}\\
Suppose $H_1$ and $H_2$ are linear normed spaces, ${\cal O}:H_1{\rightarrow}H_2$ is a linear map, and $\overline{[D({\cal O})]}=H_1$, where $\overline{[\bullet]}$ denotes closure of a space $\bullet$.
Operator ${\cal O}$ is called the \textbf{closed operator}, when ${\cal O}$ fulfils 
\begin{align*}
&\xi_n{\in}D({\cal O})\Land\xi{\in}H_1\Land\chi{\in}H_2\Land
\(\lim_{n\rightarrow\infty}{{\cal O}(\xi_n)}=\xi\){\implies}\xi{\in}D({\cal O})\Land\({\cal O}(\xi)=\chi\).
\end{align*}
When operator $\tilde{{\cal O}}:H_1{\rightarrow}H_2$ fulfils $D(\tilde{{\cal O}}){\subset}D({\cal O})$ and $\tilde{{\cal O}}(\xi)={\cal O}(\xi)$ for any $x\in{H_1}$, $\tilde{{\cal O}}$ is called the \textbf{pre-closed operator}.\QED
\end{definition}
\noindent
We state the following lemma without proof (that is in literatures, e.g., Ref.\cite{yosida2012functional}).
\begin{lemma}\label{closure}
Under the same assumptions of \textup{\textbf{Definition \ref{closedOp}}};
\begin{enumerate}
\item When $\tilde{{\cal O}}$ is a pre-closed operator, there exists closure $\overline{[\tilde{{\cal O}}]}$, that fulfils $D(\overline{[\tilde{{\cal O}}]}){\subset}D({\cal O})$ for any closure ${\cal O}$ of $\tilde{{\cal O}}$.
\item For any linear operator ${\cal O}:H_1{\rightarrow}H_2$, adjoint-operator $\hat{{\cal O}}:H_2{\rightarrow}H_1$ is a closed operator.\QED
\end{enumerate}
\end{lemma}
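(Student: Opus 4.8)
The plan is to reduce both statements to the standard graph picture: write $G(T):=\{(\xi,T\xi):\xi\in D(T)\}\subset H_1\times H_2$ for the graph of an operator and equip $H_1\times H_2$ with the product norm. In this language Definition~\ref{closedOp} says precisely that $T$ is closed iff $G(T)$ is a closed subset of $H_1\times H_2$, and this reformulation is what makes both parts accessible.

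For the first part I would \emph{define} the closure through its graph, setting $G(\overline{[\tilde{T}]}):=\overline{G(\tilde{T})}$, the norm-closure of the graph in $H_1\times H_2$. The only thing to check is that this closed set is again the graph of a single-valued operator, i.e.\ that $(0,\chi)\in\overline{G(\tilde{T})}$ forces $\chi=0$. Here I would invoke the pre-closedness hypothesis: by assumption $\tilde{T}$ admits a closed extension $T$ with $\tilde{T}=T$ on $D(\tilde{T})$, so $G(\tilde{T})\subseteq G(T)$ and, $G(T)$ being closed, $\overline{G(\tilde{T})}\subseteq G(T)$. Since $G(T)$ is single-valued, so is its subset $\overline{G(\tilde{T})}$; thus $\overline{[\tilde{T}]}$ is a well-defined closed operator. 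The same inclusion $\overline{G(\tilde{T})}\subseteq G(T)$ yields $D(\overline{[\tilde{T}]})\subseteq D(T)$ for every closed extension $T$, giving simultaneously the minimality claimed in the statement.

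For the second part I would argue directly from the defining relation of the adjoint, $g_\theta(T\xi,\chi)=g_\theta(\xi,\hat{T}\chi)$, valid for every $\xi\in D(T)$ whenever $\chi\in D(\hat{T})$. Take a sequence $\chi_n\in D(\hat{T})$ with $\chi_n\to\chi$ in $H_2$ and $\hat{T}\chi_n\to\zeta$ in $H_1$; the goal is $\chi\in D(\hat{T})\Land\hat{T}\chi=\zeta$. Fixing $\xi\in D(T)$ and passing to the limit in $g_\theta(T\xi,\chi_n)=g_\theta(\xi,\hat{T}\chi_n)$ should give $g_\theta(T\xi,\chi)=g_\theta(\xi,\zeta)$ for all $\xi\in D(T)$, which is exactly the assertion that $\chi\in D(\hat{T})$ with $\hat{T}\chi=\zeta$; uniqueness of the partner $\zeta$ follows from $\overline{D(T)}=H_1$.

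The step I expect to be the real obstacle is the limit-passage in the last paragraph, because $g_\theta$ is the indefinite complex $\theta$-metric rather than a positive-definite form inducing the norm of $H_1,H_2$. I therefore cannot apply a Cauchy--Schwarz inequality in $g_\theta$ directly; instead I would first establish a continuity estimate of the shape $|g_\theta(\aaa,\bbb)|\le C\,\|\aaa\|\,\|\bbb\|$ relating the $\theta$-pairing to the chosen norm, e.g.\ by diagonalising $\bm{\eta}_\theta$ through the rotation $\bm{\Lambda}_\theta(\phi)$ and using $|\sqrt{\sigmaet}|=1$, so that $g_\theta(\xi,\cdot)$ and $g_\theta(\cdot,\zeta)$ are genuinely continuous. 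Along light-like directions the $\theta$-norm degenerates, and I would control those by passing through the Euclidean norm $\|\cdot\|_{\textup{E}}$, which stays nondegenerate; once this continuity is secured, the graph argument of part~1 and the limit interchange of part~2 are both routine.
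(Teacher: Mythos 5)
Your proof is correct, but note that the paper itself offers no proof to compare against: Lemma~\ref{closure} is explicitly ``stated without proof,'' with a citation to Yosida. What you have written is the standard argument from that literature, namely (i) defining the closure by $G(\overline{[\tilde{T}]}):=\overline{G(\tilde{T})}$ and using the assumed closed extension $T$ (which is exactly how the paper defines ``pre-closed'') to force single-valuedness and the minimality $D(\overline{[\tilde{T}]})\subseteq D(T)$, and (ii) passing to the limit in the defining identity $g_\theta(T\xi,\chi_n)=g_\theta(\xi,\hat{T}\chi_n)$. You have also correctly isolated the one point at which the paper's setting departs from the textbook one: the pairing $g_\theta$ is an indefinite complex bilinear form, not the inner product generating the norm, so both the limit interchange in part~2 and the uniqueness of $\zeta$ from $\overline{[D(T)]}=H_1$ require a continuity bound of the form $|\<\aaa,\bbb\>_\theta|\le C\,\|\aaa\|^{~}_{\textup{E}}\,\|\bbb\|^{~}_{\textup{E}}$. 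That bound does hold (pointwise the entries of $\bm\eta^{~}_\theta$ have unit modulus, so $|g_\theta(\aaa,\bbb)|$ is dominated by the Euclidean pointwise product and Cauchy--Schwarz applies in the Euclidean $L_2$ norm), and it is better obtained against $\|\cdot\|^{~}_{\textup{E}}$ than against the absolute norm $\|\cdot\|^{~}_\theta$ of (\ref{L2norm1}), which is an $L_1$-type integral of $|g_\theta(\aaa,\aaa)|^{1/2}$ and degenerates on light-like forms as $\theta\rightarrow0$ --- consistent with the restriction $\theta\in(0,1]$ the paper imposes throughout. So your proposal fills a gap the paper deliberately leaves to the references, and does so correctly.
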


\vskip 5mm
\noindent
The following remark immediately follows from \textbf{Lemma \ref{closure}}:
\begin{remark}\label{rem5}
When self-dual linear operator ${\cal O}:{H}\rightarrow{H}$ exits, it has closer in $\Mg$ by means of the $L^2$-norm.
\end{remark}
\begin{proof}
Due to the assumption, operator ${\cal O}$ is linear and self-adjoint yielding $g_\theta({\cal O}(\xi),\chi)=g_\theta(\xi,{\cal O}(\chi))$ for any $\xi,\chi\in{D({\cal O})}$.
Thus, ${\cal O}$ is a pre-closed operator having closure owing to \textbf{Lemma \ref{closure}}-\textit{2}.
\end{proof}

%
%
\subsection{Hodge-dual and Laplace--Beltrami operators}
The Laplace operator, 
\begin{align*}
\Delta:=\<\bm\partial|\bm\partial\>={\eta}^{\bcdots}\partial_\bcdot\partial_\bcdot, 
\end{align*}
is the second-order differential operator acting on a scalar function.
We generalise it to the Laplace--Beltrami operator using the {co-differential} operator owing to the Hodge-dual operator.
This subsection discusses the Laplace--Beltrami operator in the amphometric space.

First, we define the Hodge-dual operator for a $p$-form object in $\TsMg$:
\begin{definition}\textup{(Hodge-dual operator)}\label{DefHodgeDual}\\
Suppose $\Mg$ is an $n$-dimensional oriented and smooth manifold equipping the amphometric tensor and $\aaa,\bbb\in\Omega^p(\TsMg)$ are $p$-form objects.
The Hodge-dual operator, denoted as $\HD$, is defined to give
\begin{align}
\aaa\wedge\HD(\bbb)&:=
\sigma_{\hspace{-.1em}\theta}^\hlf\hspace{.1em}\<\aaa|\bbb\>^{~}_\theta\hspace{.1em}\vvv^{~}_\theta.
\label{HodgeD}
\end{align}
\QED
\end{definition}\noindent
This definition is different from the standard one by the factor $\sigma_{\hspace{-.1em}\theta}^\hlf$\!.
Determinant $\sigma_{\hspace{-.1em}\theta}$ is $SO_4$ invariant and $\Mg$ is given as the quotient space as (\ref{Lmani}); thus, the definition (\ref{HodgeD}) is independent of a choice of basis.
We can write $\HD(\aaa)$ using tensor coefficients as
\begin{align}
\HD(\aaa)&:=\sigma_{\hspace{-.1em}\theta}^\hlf\hspace{.2em}
a_{i_{1:p}}\hspace{.1em}\epsilon^{i_{1:p}}_{\hspace{1.2em}i_{p+1:n}}\hspace{.1em}
\eee^{i_{p+1:n}}=\hat{\aaa},\label{HodgeDual}
\end{align}
where $\epsilon^{~}_{0\cdots n-1}$ is the Levi-Civita tensor.

Tensor coefficients of $\hat{\aaa}$ are obtained from those of $\aaa$ like
\begin{align*}
\hat{\aaa}=&\hat{a}_{i_{1:n-p}}\hspace{.1em}\eee^{i_{1:n-p}}
\quad\text{with}\quad
\hat{a}_{i_{1:n-p}}:=\sigma_{\hspace{-.1em}\theta}^\hlf\hspace{.1em}
a_{j_{1:p}}\hspace{.1em}\epsilon^{j_{1:p}}_{\hspace{1.2em}i_{1:n-p}}\!,
\end{align*}
owing to (\ref{HodgeDual}).
An $n$-dimensional volume form is provided using the Hodge-dual operator as
\begin{align*}
\HD(1)&={\sigma_{\hspace{-.1em}\theta}^\hlf}\hspace{.2em}\vvv^{~}_\theta.
\end{align*}
The Hodge-dual operator with any $\theta\in[-1,1]$ yields
\begin{align}
\HD\(\HD(\aaa)\)&\overset{\text{(\ref{HodgeDual})}}{=}{\sigma_{\hspace{-.1em}\theta}^{~}}(-1)^{p(n-p)}\aaa;\label{HHa}
\end{align}
thus, an eigenvector of  the Hodge-dual operator has eigenvalue $\pm{\sigma_{\hspace{-.1em}\theta}^{\hlf}}$ or $\pm{i}{\sigma_{\hspace{-.1em}\theta}^{\hlf}}$ since
 \begin{align*}
\HD\(\aaa\)=\lambda\aaa&{\implies}\HD\(\HD\(\aaa\)\)
=\lambda^2\aaa\overset{\text{(\ref{HHa})}}{=}{\sigma_{\hspace{-.1em}\theta}^{~}}(-1)^{p(n-p)}\aaa
{\implies}\lambda=\pm{\sigma_{\hspace{-.1em}\theta}^{\hlf}}~\textrm{or}~\pm{i}{\sigma_{\hspace{-.1em}\theta}^{\hlf}}.
\end{align*}
 
We define \textit{co-differential operator} ${\cod}$ as
\begin{align*}
{\cod_\theta}\aaa&:=(-1)^{p}\hspace{.2em}\HDi\hspace{-.2em}\(d_\theta\HD(\aaa)\)=
{\sigma_{\hspace{-.1em}\theta}^{-1}}(-1)^{n(p+1)+1}\hspace{.2em}\HD\(d_\theta\HD(\aaa)\),
\end{align*}
yielding ${\cod}\hspace{.1em}{\cod}=0$.
We note that the Hodge-dual operator depends on the metric tensor; thus, the co-differential also does.
The co-differential operator is a {weak}-adjoint of external derivative such as 
\begin{align}
\int_{\Mg}\<{d_\theta\aaa|\bbb}\>^{~}_\theta\hspace{.1em}\vvv^{~}_\theta&=
{\sigma_{\hspace{-.1em}\theta}^{-\hlf}}\int_{\Mg}d_\theta\aaa\wedge\HD(\bbb)=
{\sigma_{\hspace{-.1em}\theta}^{-\hlf}}(-1)^p\int_{\Mg}\aaa{\wedge}d_\theta\HD(\bbb),\notag\\&=
{\sigma_{\hspace{-.1em}\theta}^{-\hlf}}\int_{\Mg}\aaa{\wedge}\HD\(\(-1\)^p\HDi\hspace{-.2em}
\(d_\theta\HD\(\bbb\)\)\),\notag\\&
=\int_{\Mg}\<{\aaa|{\cod_\theta}\bbb}\>^{~}_\theta\hspace{.1em}\vvv^{~}_\theta,\label{dabadb}
\end{align}
and similarly $\<{{\cod_\theta}\aaa|\bbb}\>\simeq\<{\aaa|d_\theta\bbb}\>$\footnote{We denote a weak equivalent relation as $\aaa\simeq\bbb$, which means $\int\!\aaa=\int\!\bbb$.}.
The Storks theorem and an assumption that $\Mg$ is compact are used here.

The \textit{Laplace--Beltrami operator} is defined using co-differential as
\begin{align*}
\Delta_\theta&:=d_\theta\hspace{.1em}{\cod_\theta}+{\cod_\theta}\hspace{.1em}d_\theta,
\end{align*}
which is a linear operator on $\Omega^p(\TsMg)$ for each $p$-form object and is commutable with the Hodge-dual operator.
We note that it is weak self-adjoint like 
\begin{align*}
\<\Delta_\theta\aaa|\bbb\>^{~}_\theta=\<d_\theta\cod_\theta\aaa|\bbb\>^{~}_\theta+\<{\cod_\theta}d_\theta\aaa|\bbb\>^{~}_\theta
&\simeq\<\aaa|d_\theta{\cod_\theta}\bbb\>^{~}_\theta+\<\aaa|{\cod_\theta}d_\theta\bbb\>^{~}_\theta=\<\aaa|\Delta_\theta\bbb\>^{~}_\theta.
\end{align*}
The Laplace--Beltrami operator for a scalar function (a zero-form object) has a representation owing to the standard basis as
\begin{align}
\Delta_\theta&=
{\eta^{~}_\theta}^{\bcdots}\partial_\bcdot\partial_\bcdot,\label{L-Btri}
\end{align}
which is nothing other than the standard Laplace operator.
In reality, it provides 
\begin{align*}
\text{(\ref{L-Btri})}&=
\left\{
\begin{array}{rl}
\(\partial_0\)^2+\(\partial_1\)^2
+\(\partial_{2}\)^2+\(\partial_{3}\)^2\!,&(\theta=\hspace{.8em}0),\\
\(\partial_0\)^2-\(\partial_1\)^2-\(\partial_{2}\)^2-\(\partial_{3}\)^2\!,&(\theta=\pm1).
\end{array}
\right.
\end{align*}
in a four-dimensional.
Thus, it is equivalent to the standard Laplace operator at $\theta=\pm1$ (Lorentzian) and $\theta=0$ (Euclidean).
Hereafter, we denote objects $\bullet^{~}_\theta$ with $\theta\rightarrow0$ as $\bullet^{~}_\textsc{e}$ and with $\theta\rightarrow\pm1$ as $\bullet^{~}_\textsc{l}$.
We state the following lemma\cite{warner1983foundations} in the Euclidean space without proof:
\begin{lemma}\label{regularity}
When $\HHH$ is a weak solution of $\Delta^{~}_\textsc{e}\hspace{.1em}\HHH=\aaa\in\Omega^p(\TsM^{~}_\textsc{e})$ fulfiling $\<\Delta^{~}_\textsc{e}\HHH,\hspace{.1em}\bbb\>^{~}_\textsc{e}=\<\aaa,\bbb\>^{~}_\textsc{e}$ for any $\bbb\in\Omega^p(\TsM)$, it is a strong solution:
\begin{align}
~^\forall\bbb\in\Omega^p(\TsM^{~}_\textsc{e}),
~^\exists\HHH,\aaa\in\Omega^p(\TsM^{~}_\textsc{e}),
~\<\Delta^{~}_\textsc{e}\hspace{.1em}\HHH,\bbb\>^{~}_\textsc{e}=\<\aaa,\bbb\>^{~}_\textsc{e}\implies
\Delta^{~}_\textsc{e}\hspace{.1em}\HHH=\aaa.
\end{align}
\end{lemma}
\begin{lemma}\label{lemma66}
Suppose $\{\aaa_n\}\in\Omega^p(\TsM_\textsc{e})$ is a sequence of smooth $p$-form objects in $\TsM_\textsc{e}$ such that $\|\aaa_n\|^{~}_\textsc{e}{\leq}c$ and $\|\Delta_\textsc{e}\aaa_n\|^{~}_\textsc{e}{\leq}c$ for any $n$ and for some constant $c>0$.
Then, a subsequence of $\{\aaa_n\}$ is a Cauchy sequence in $\Omega^p(\TsM_\textsc{e})$.
\end{lemma}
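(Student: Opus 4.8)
The authors take this compactness statement from Ref.~\cite{warner1983foundations}; the route I would follow is the classical Rellich-type argument that underlies Hodge theory. The claim is purely a compactness assertion: from $L_2$-control of both $\aaa_n$ and $\Delta_1\aaa_n$ one must extract a subsequence that converges, and hence is Cauchy, in the norm $\|\cdot\|_1$. The plan is to (i) promote the two $L_2$ hypotheses to a single uniform bound in the first Sobolev norm by an elliptic a~priori estimate, and then (ii) feed that bound into Rellich's compact-embedding lemma to obtain the convergent subsequence. The key structural fact I would use at the outset is that the Euclidean limit $\theta\to1-0$ makes $\Delta_1=d\cod+\cod d$ a genuinely elliptic, non-negative, self-adjoint operator with respect to the positive-definite inner product $\<\cdot,\cdot\>_1$, so the whole elliptic toolbox applies.

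First I would turn the hypotheses into a first-order bound. Because $\<\cdot,\cdot\>_1$ is positive-definite, the integration by parts (\ref{dabadb}) that produces (\ref{dHdH0}) gives, for every $n$,
\[
\<\aaa_n,\Delta_1\aaa_n\>_1=\|d\aaa_n\|_1^2+\|\cod\aaa_n\|_1^2 .
\]
Cauchy--Schwarz then bounds the left-hand side by $\|\aaa_n\|_1\,\|\Delta_1\aaa_n\|_1\le c^2$, so $\|d\aaa_n\|_1$ and $\|\cod\aaa_n\|_1$ are uniformly bounded. Next I would invoke a Garding (basic elliptic) inequality for the first-order Dirac-type operator $d+\cod$, whose principal symbol is injective: there is a constant $C$ with $\|\aaa\|_{W^1}^2\le C\big(\|d\aaa\|_1^2+\|\cod\aaa\|_1^2+\|\aaa\|_1^2\big)$, where $\|\cdot\|_{W^1}$ denotes the first Sobolev norm. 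Combining these with the hypothesis $\|\aaa_n\|_1\le c$ yields $\|\aaa_n\|_{W^1}^2\le2Cc^2$, i.e. $\{\aaa_n\}$ is bounded in the first Sobolev space.

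Then I would apply Rellich's lemma: on a compact manifold the inclusion of the first Sobolev space into $L_2$ is compact, so a sequence bounded in $W^1$ admits a subsequence converging in $\|\cdot\|_1$; a convergent sequence is Cauchy, which is exactly the claim. Concretely this reduces, via a finite partition of unity subordinate to coordinate charts, to the corresponding Sobolev and Rellich lemmas on the flat torus, where both are proven by elementary estimates on the Fourier coefficients.

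The hard part will be the two imported analytic facts rather than their assembly. Establishing Rellich compactness through the localization is delicate: multiplying by cut-off functions and transporting to the torus generates lower-order commutator terms (derivatives landing on the cut-offs), and these must be absorbed so that the global $W^1$-bound genuinely survives the transfer. The Garding estimate in turn rests on the ellipticity of $\Delta_1$, most transparently via the Weitzenbock form of $d\cod+\cod d$ at $\theta=1$; this is precisely why the positive-definite Euclidean limit is taken before the lemma is applied, since for $\theta\in(0,1)$ the indefinite pre-norm would spoil the non-negativity used above. Once these standard ingredients are granted, the extraction of the Cauchy subsequence is immediate, which is why the statement is quoted from Ref.~\cite{warner1983foundations} without proof.
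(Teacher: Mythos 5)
The paper states this lemma without proof, deferring entirely to Ref.~\cite{warner1983foundations}; your argument --- deriving uniform bounds on $\|d\aaa_n\|^{~}_1$ and $\|\cod\aaa_n\|^{~}_1$ from the Dirichlet identity and Cauchy--Schwarz, upgrading to a uniform first-Sobolev bound via a G\r{a}rding estimate for the elliptic operator $d+\cod$, and concluding by Rellich compactness localized to the torus via a partition of unity --- is precisely the standard proof given in that reference. It is correct, and correctly identifies why the positive-definite Euclidean limit $\theta\to1$ is essential before the lemma can be applied.
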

When $p$-form object $\HHH\in\Omega^p(\TsMg)$ satisfying the Laplace equation $\Delta_\theta\HHH=0$, it  is called a \textit{harmonic $p$-form}.
The remark below immediately follows from \textbf{Lamma \ref{lemma66}}:
\begin{remark}\label{finiteD}
A space of harmonic $p$-forms with the Euclidean metric is a finite dimensional for each $p\in\{0,\cdots,n-1\}$.
\QED
\end{remark}
\begin{proof}:
If $\bm {H}^p_\textsc{e}$ is an infinite-dimensional space, it has an infinite orthogonal sequence.
This orthogonal sequence must contain the Cauchy sequence owing to \textbf{Lamma \ref{lemma66}}, and it contradicts the infinite dimensional of $\bm {H}^p_\textsc{e}$.
Therefore, $\bm {H}^p_\textsc{e}$ is finite-dimensional.
\end{proof}

The harmonic form yields
\begin{align*}
\Delta_\theta\bm\HHH=0&\implies
0=\<\bm\HHH|\Delta^{~}_\theta\bm\HHH\>^{~}_\theta\simeq
\<d_\theta\bm\HHH\>^{2}_\theta+\<\cod_\theta\bm\HHH\>^{2}_\theta.
\end{align*}
At $\theta=0$, 
\begin{align*}
&\<d_\textsc{e}\bm\HHH\>^{2}_\textsc{e}+\<\cod_\textsc{e}\bm\HHH\>^{2}_\textsc{e}=0
\implies\<d_\textsc{e}\bm\HHH\>^{~}_\textsc{e}=0\Land\<\cod_\textsc{e}\bm\HHH\>^{~}_\textsc{e}=0
\end{align*}
since the $L^2$-norm is positive definite for non-zero forms in the Euclidean space.
Moreover, it is non-degenerated as
\begin{align*}
&^\forall\aaa\in\Omega^p(\TsM_\textsc{e}),~\<\aaa\>^{2}_\textsc{e}=\|\aaa\|^{2}_\textsc{e}=0\implies\aaa=0.
\end{align*}
Therefore, the Laplace equation on a $p$-form object is equivalent to a set of first-order differential and co-differential equations in the Euclidean space, such as 
\begin{align}
\Delta_\textsc{e}\bm\HHH=0\iff
d_\textsc{e}\bm\HHH=\cod_\textsc{e}\bm\HHH=0.\label{deHzero}
\end{align}
For general $\theta$ values, the pseudo-norm is neither positive definite nor non-degenerated; thus, the above proof does not work.
When the harmonic $p$-form exists at a finite value of $\theta$, it also induces (\ref{deHzero}).

%
%
\begin{remark}\label{rem6}
Suppose $\HHH\in\Omega^p(\TsMg)$ exists as a function of the pseudo norm squared in $\theta\in[-1,1]$.
When $\HHH$ is the harmonic $p$-form,
we obtain
\begin{align}
\Delta_\theta\bm\HHH=0\iff
d_\theta\bm\HHH=\cod_\theta\bm\HHH=0,\label{dHzeroQ}
\end{align}
for $\theta\in[-1,1]$.\QED
\end{remark}
\begin{proof}
In the Laplace--Beltrami equation, the amphometric tensor corresponds to a transformation from the Euclidean metric space, which acts on the Euclidean standard basis $d\bm{\xi}_\textsc{e}$ in $\TsMg$ as
\begin{align*}
\hat{K}:
&d\bm{\xi}_\textsc{e}\mapsto d{\bm{\xi}}_\theta:=\hat{\bm{K}}(\theta)\cdot d\bm{\xi}_\textsc{e}
=(d\xi_\textsc{e}^0,\hat\kappa(\theta)\hspace{.1em}d\xi_\textsc{e}^1,\cdots,
\hat\kappa(\theta)\hspace{.1em}d\xi_\textsc{e}^{n-1}),
\intertext{where $\hat{\bm{K}}$ is a ($n\!\times\!n$)-matrix such as}
&\hat{\bm{K}}(\theta):=\text{diag}[1,\overbrace{\hat\kappa(\theta),\cdots,\hat\kappa(\theta)}^{n-1}]
\implies\hat{\bm{K}}(\theta)\cdot\hat{\bm{K}}(-\theta)=\bm{1}_{\!n}.
\end{align*}
The external differential acts on the $p$-form object $\HHH$ as
\begin{align*}
d_\textsc{e}\HHH\(\<\bm{\xi}_\textsc{e}\>^2_\textsc{e}\)&=
d\xi_\textsc{e}^j\wedge\frac{\partial\HHH\(\<\bm{\xi}_\textsc{e}\>^2_\textsc{e})\)}
{\partial\xi_\textsc{e}^j}=
d\xi_\textsc{e}^j\wedge\frac{\partial{\xi}_\theta^k}{\partial\xi_\textsc{e}^j}
\frac{\partial\HHH\(\left\langle\hat{\bm{K}}(-\theta)\cdot\bm{\xi}_\theta\right\rangle^2_\textsc{e}\)}
{\partial{\xi}_\theta^k}=
d\xi_\theta^k\wedge\frac{\partial\HHH\(\<\bm{\xi}_\theta\>^2_\theta\)}{\partial\xi_\theta^k},\\&=
d_\textsc{e}\HHH\(\<\bm{\xi}_\theta\>^2_\theta\),
\end{align*}
and the same manner for $\bar{d}$; thus, we obtain
\begin{align*}
d_\textsc{e}\HHH\(\<\bm{\xi}_\textsc{e}\>^2_\textsc{e}\)=
\bar{d}_\textsc{e}\HHH\(\<\bm{\xi}_\textsc{e}\>^2_\textsc{e}\)=0{\implies}
d_\theta\HHH\(\<\bm{\xi}_\theta\>^2_\theta\)=
\bar{d}_\theta\HHH\(\<\bm{\xi}_\theta\>^2_\theta\)=0.
\end{align*}
Therefore, the remark is maintained.
\end{proof}
We embed the Euclidean and Lorentzian metric space into the amphometric space.
When the harmonic form $\HHH$ exists at $\theta=0$, we vary a $\theta$ value from zero to finite $\theta$.
Then, the Laplace--Beltrami equation and its solution varied accordingly keeping harmonic.
Here, we assume the solution's existence as a function of the pseudo norm squared, though this assumption is not trivial.
We discuss the point later in this report.

%
%
\subsection{Hodge decomposition}
A space of $p$-form objects can be decomposed into three spaces including the harmonic $p$-form which a kernel of the Laplace--Beltrami equation.
This decomposition called the Hodge decomposition, is essential in ensuring the existence of solutions for the Poisson equation.

We introduce a space of harmonic $p$-forms as
\begin{align*}
\bm {H}_\theta^p:=\textrm{Ker}\Delta_\theta\big|_{\Omega^p(\TsMg)}=
\left\{
\HHH\in\Omega^p(\TsMg)\left|\Delta_\theta\HHH=0\right.\right\}.
\end{align*}
Owing to (\ref{dHzeroQ}), we can divide the $L^2$ space in $\TsMg$ into three subs spaces including $\bm {H}_\theta^p$.
We define two spaces, $d(\Omega_\textsc{e}^{p-1})$ and ${\cod}(\Omega_\textsc{e}^{p+1})$, respectively, defined as
\begin{align*}
d(\Omega_\textsc{e}^{p-1})&:=\left\{\aaa\in\Omega^p(\TsMg)\big|
\aaa=d\bbb,\bbb\in\Omega^{p-1}(\TsMg)
\right\}\!,\\
{\cod}(\Omega_\textsc{e}^{p+1})&:=\left\{\aaa\in\Omega^p(\TsMg)\big|
\aaa=\bar{d}\bbb,\bbb\in\Omega^{p+1}(\TsMg)
\right\}\!.
\end{align*}
A space of $p$-form objects can be decomposed into three spaces, namely the Hodge decomposition.

%
%
\begin{theorem}\hspace{-.4em}\textup{\textbf{1}}\hspace{.2em}\
\textup{(Hodge decomposition in the Euclidean space)}\label{remB-H}
A space completion of $\Omega^p\(\TsM_\textsc{e}\)$ concerning the norm $\|\aaa\|^{~}_\textsc{e}$, which is denoted as $L^2\(\TsM_\textsc{e}\)$, has the following orthogonal-decomposition$:$
\begin{align*}
L^2\(\TsM_\textsc{e}\)&=\bm {H}^p_\textsc{e}{\oplus}d(\Omega_\textsc{e}^{p-1}){\oplus}{\cod}(\Omega_\textsc{e}^{p+1}).
\end{align*}
\QED
\end{theorem}
\noindent
One can find a proof of this theorem in literature, e.g., Ref.\cite{warner1983foundations}.
Here, we provide an outline of proof in the Euclidean space.
\begin{proof}: The proof consists of two steps:
\paragraph{\textbf{Step 1:}}
\begin{subequations}
\textbf{Remark \ref{rem6}} immediately yields that
\begin{align}
\text{(\ref{deHzero})}{\implies}\bm {H}^p_\textsc{e}{\perp} d(\Omega_\textsc{e}^{p-1})\Land\bm {H}^p_\textsc{e}{\perp} {\cod}(\Omega_\textsc{e}^{p+1}).\label{Step1-2}
\end{align}
Moreover, we obtain that
\begin{align}
{^\forall}d\aaa{\in}d\(\Omega_\textsc{e}^{p-1}\)&\Land{^\forall}\cod\bbb{\in}\cod(\Omega_\textsc{e}^{p+1})
{\implies}\(\<d\aaa|\cod\bbb\>_\textsc{e}=\<dd\aaa|\bbb\>_\textsc{e}=0\)
{\implies}d(\Omega_\textsc{e}^{p-1}){\perp} {\cod}(\Omega_\textsc{e}^{p+1}),\label{Step11}
\end{align}
thus, three spaces, $\bm {H}^p_\textsc{e}, d(\Omega_\textsc{e}^{p-1})$ and $ {\cod}(\Omega_\textsc{e}^{p+1})$, are orthogonal to each other.
\end{subequations}

Next, we show 
\begin{align*}
\(\overline{\left[\bm {H}^p_\textsc{e}\oplus d(\Omega_\textsc{e}^{p-1}) \oplus {\cod}(\Omega_\textsc{e}^{p+1})\right]}\)^c=\varnothing,
\end{align*}
where $(\bullet)^c$ is complement concerning $L^2\(\TsM_\textsc{e}\)$. 
When $\aaa\in\Omega_\textsc{e}^p$ is orthogonal to these three spaces, we obtain that
\begin{align*}
{^\forall}\aaa \Land{^\forall}\bbb\in{d}(\Omega_\textsc{e}^{p-1})& {\implies}
\(0=\<\aaa|d\bbb\>_\textsc{e}\overset{\text{(\ref{dabadb})}}{=}
\<\cod\aaa|\bbb\>_\textsc{e}\)
{\implies}\cod\aaa=0\overset{\text{(\ref{dHzeroQ})}}{{\implies}}\aaa\in\cod(\Omega_\textsc{e}^{p+1}),\\
{^\forall}\aaa \Land{^\forall}\bbb\in\cod(\Omega_\textsc{e}^{p+1})& {\implies}
\(0=\<\aaa|\cod\bbb\>_\textsc{e}\overset{\text{(\ref{dabadb})}}{=}
\<d\aaa|\bbb\>_\textsc{e}\)
{\implies}d\aaa=0\overset{\text{(\ref{dHzeroQ})}}{{\implies}}\aaa\in\bm {H}^p_\textsc{e},
\end{align*}
and that contradicts the assumptions.
Therefore,  we obtain that 
\begin{align}
\overline{\left[\bm {H}^p_\textsc{e}\oplus d(\Omega_\textsc{e}^{p-1}) \oplus {\cod}(\Omega_\textsc{e}^{p+1})\right]}=L^2(\Omega_\textsc{e}^p).\label{step1}
\end{align}

\paragraph{\textbf{Step 2:}}\label{step2}
To complete proof, we must show the existence of closure and that  these three spaces are closed subspaces of $L^2(\Omega_\textsc{e}^p)$.
This is not trivial because differential operator $d$ is not even a continuous operator.
In reality, \textbf{Remark \ref{rem5}} owing to \textbf{Lemma \ref{closure}}-\textit{2} ensures them.
\end{proof}
\noindent
The \dR theorem and \textbf{Theorem \ref{remB-H}.1} immediately yield a homomorphism:
\begin{align*}
d(\Omega_\textsc{e}^{p-1}){\cong}H^p_\dr(\M_\textsc{e}){\cong}H_{~}^p(\M_\textsc{e}).
\end{align*}

Next, we extent \textbf{Theorem \ref{remB-H}.1} from the Euclidean metric to the amphometric. 
%
%
\setcounter{definition}{10}
\begin{theorem}\hspace{-.4em}\textup{\textbf{2}}\label{L2norm}\hspace{.2em}
\textup{(Hodge decomposition in the amphometric space)}
A closure of $\Omega^p\(\TsMg\)$ owing to the absolute norm \textup{(\ref{L2norm1})}, which is denoted as $L^2\(\TsMg\)$, has the following orthogonal-decomposition;
\begin{align*}
L^2\(\TsMg\)&=\bm {H}^p_\theta{\oplus}
d(\Omega_\theta^{p-1}){\oplus}
{\cod}(\Omega_\theta^{p+1}),
\end{align*}
in $\theta\in[\varepsilon,1]$ with $0<\varepsilon<1$.
Here, two spaces, \textup{i.e.}, $d(\Omega_\theta^{p-1})$ and ${\cod}(\Omega_\theta^{p+1})$, are orthogonal to each other, which means 
\begin{align}
^\forall\aaa{\in}d(\Omega_\theta^{p-1})\Land^\forall\bbb{\in}{\cod}(\Omega_\theta^{p+1})&\implies
\lim_{\theta\rightarrow1}{\<\aaa|\bbb\>}=0.\label{ortho}
\end{align}
\end{theorem}
\begin{proof}
In the \textit{Step 1} in \textbf{Theorem \ref{remB-H}.1}, (\ref{Step11}) and (\ref{Step1-2}) are maintained in the amphometric with $\theta\in[-1,1]$ independent of choice of the $L^2$-norm, and (\ref{step1}) is true for the completion owing to the absolute norm (\ref{L2norm1}).
\end{proof}

\subsection{Green's operator with Euclidean metric}\label{GreenEuclid}
This section introduces a Green's operator in the \textbf{Euclidean} metric space.
Owing to \textbf{Remark \ref{finiteD}}, space $\bm {H}^p_\textsc{e}$ can be spanned by a finite number of orthonormal base vectors $(\hhh^1,\cdots,\hhh^m)$.
A vector rank $m$ is provided by the $p$'th cohomology of $\bar{\bm {H}}^p_\textsc{e}$.
We introduce \textit{harmonic projection operator} $H$ using orthonormal basis $\hhh^i$  such as
\begin{align*}
H\aaa:=&\sum_{i=1}^{m}\<\aaa|\hhh^i\>_\textsc{e}\hspace{.1em}\hhh^i\!,
\quad\text{yielding}\quad
H{\bcdot}H=H.
\end{align*}
We write a space of $p$-form objects orthogonal to any $\hhh^i$ as 
\begin{align}
(\bm {H}^p_\textsc{e})^\perp&=d(\Omega_\textsc{e}^{p-1}){\oplus}{\cod}(\Omega_\textsc{e}^{p+1}), \label{decomp}
\end{align}
where \emph{orthogonal} means that
\begin{align*}
^\forall\hhh\in\bm {H}^p_\textsc{e},~^\forall\!\ooo\in(\bm {H}^p_\textsc{e})^\perp \implies \<\hhh|\ooo\>_\textsc{e}=0.
\end{align*}
A complement of the harmonic operator
\begin{align}
\overline{H}&:=\bm{1}-H,\label{GreensOP}
\end{align}
where $\bm{1}$ is the identity operator.

We consider the Poisson equation for given $\sss\in\Omega^p(\TsM_\textsc{e})$ in the Euclidean space such as
\begin{align}
\Delta_\textsc{e}\FFF=\sss,\label{Poissoneqs}
\end{align}
where inhomogeneous term $\sss$ is called the source term in this report.
Since any $p$-form objects can be decompose as (\ref{decomp}), a solution  $\FFF$ can be written
\begin{align*}
\FFF=\FFF^h+\FFF^\perp,\quad\text{where}~\bullet^h:=H\bullet,~\bullet^\perp:=\overline{H}\bullet.
\end{align*}
The harmonic projection operator commutes with the Laplace--Beltrami operator; thus, we obtain
\begin{align*}
H\Delta_\textsc{e}\FFF=\Delta_\textsc{e}\FFF^h=0
\quad\text{and}\quad
\overline{H}\Delta_\textsc{e}\FFF=\Delta_\textsc{e}\FFF^\perp=\sss^\perp.
\end{align*}
Therefore, any Poisson equation can be written as
\begin{align}
(\text{\ref{Poissoneqs}})&\implies\Delta_\textsc{e}\FFF^\perp=\sss^\perp
\end{align}
In conclusion, there exists the operator, namely Green's operator, such that:
\begin{align}
G:\Omega^p(\TsM_\textsc{e})\rightarrow&(\bm {H}^p_\textsc{e})^\perp
:\sss\mapsto\FFF^\perp:=G(\sss)=\left\{
\begin{array}{cl}
0&\(\sss\in\hspace{.5em}\bm {H}^p_\textsc{e}\hspace{1em}\),\\
(\Delta_\textsc{e})^{-1}\hspace{.1em}\sss&
\(\sss\in(\bm {H}^p_\textsc{e})^\perp\).
\end{array}
\right.\label{GreensFe}
\end{align}
The existence of the Laplace--Beltrami operator inverse is ensured in the Euclidean space.
Therefore, Green's operator provides a solution of the Poisson equation (\ref{Poissoneqs}) as
\begin{align}
\FFF&=G\(\sss^\perp\).\label{SolofPoissone} 
\end{align}
Direct calculations show that Green's operator is commutable for $d$, ${\cod}$, and $\Delta_\textsc{e}$; thus, it is a weak self-adjoint operator.

We introduce a linear bounded operator $L$ as
\begin{align*}
L:\Omega^p(\TsM_\textsc{e})\otimes\Omega^p(\TsM_\textsc{e})\rightarrow\R
   :(\FFF,\aaa)\mapsto L_\FFF(\aaa):=\<\FFF|\aaa\>_\textsc{e}.
\end{align*}
When $p$-form object $\aaa\in\Omega^p(\TsM_\textsc{e})$ fulfils relation $L_\FFF(\Delta_\textsc{e}\aaa)=\<\sss^\perp|\aaa\>_\textsc{e}$, we obtain that
\begin{align*}
L_\FFF(\Delta_\textsc{e}\aaa)-\<\sss^\perp|\aaa\>_\textsc{e}&=
\<\FFF|\Delta_\textsc{e}\aaa\>_\textsc{e}-\<\sss^\perp|\aaa\>_\textsc{e}
=\<\Delta_\textsc{e}\FFF-\sss^\perp|\aaa\>_\textsc{e}=0.
\end{align*}
Thus, we obtain another representation of the Poisson equation (\ref{Poissoneqs}) as
\begin{align*}
\Delta_\textsc{e}\FFF&=\sss^\perp
{\iff}
L_\FFF(\Delta_\textsc{e}\aaa)=\<\sss^\perp|\aaa\>_\textsc{e}.
\end{align*}
Therefore, we can represent a weak solution of the Poisson equation using operator $L$ as 
\begin{align}
L_\FFF(\aaa)=\<\FFF|\aaa\>_\textsc{e}
=\left\langle G\(\sss^\perp\)|\aaa\right\rangle_\textsc{e}.
\end{align}

\subsection{Green's operator in amphometric space}\label{GreenOpQ}
This section extends the Hodge decomposition and Green's operator into the amphometric space.
The solvability of general second-order elliptic differential equations in $\R^n$ has been proven using the method of continuity and the method of \emph{a priori} estimation, both introduced by S. Bernstein\cite{Bernstein1910}.
(See also, e.g. Refs.\cite{sauvigny2006partial,krylov2008lectures}.)

First, we treat a simple equation with a given square-integrable function $f(x,\theta)$ such that:
\begin{align}
\Delta_\theta\hspace{.1em}u(x,\theta)=f(x,\theta)\label{W22equation}
\end{align}
in the amphometric space.
We discuss the existence of the solution for the equation (\ref{W22equation}) in $\theta\in[-1,1]$.
%
%
Consider the Poisson equation given in $(\Mg,\bm{g}^{~}_\theta)$ with an $SO_n$ symmetry.
Suppose 
\begin{align*}
u_\textsc{e}(x)&:=u(x,\theta=0){\in}W^{2,2},{\quad}x\in\TM_\theta
\end{align*}
is a solution of equation \textup{(\ref{W22equation})} at $\theta=0$ for given $f_\textsc{e}:=f(x,\theta=0){\in}L^2$, where $W^{2,2}$  is the Sobolev space of twice-differentiable square-integrable functions.
Suppose functions $u({x},\theta)$ and $f({x},\theta)$ are given through the pseud norm of position one-form $\xxx\in\Omega^1(\TsMg)$ such that:
\begin{align}
x:=\<\xxx,\xxx\>_{\!\theta}^{\hspace{.3em}\hlf}.
\end{align}
The Poisson equation (\ref{W22equation}) at $\theta\!=\!0$ is an elliptic-type second order differential equation; thus, solution $u$ of the equation fulfils the \textit{a priori} estimation such that:
\begin{align}
\|u_\textsc{e}\|^{~}_\textsc{e}{\leq}N\hspace{.1em}\|\Delta_\textsc{e}\hspace{.1em}u_\textsc{e}\|^{~}_\textsc{e}
=N\hspace{.1em}\|f_\textsc{e}\|^{~}_\textsc{e},\label{apriori}
\end{align}
where $N{\in}(0,\infty)$. 
Then, we consider to extent this solution to the finite values of $\theta$.
We prepare a sequence of $\theta\in[0,1]$ as
\begin{align}
\theta_i=\{\theta_0,\cdots,\theta_m\}\quad\text{with}~0=\theta_0<\theta_1<\cdots<\theta_m\leq1,~\delta_i:=\theta_{i+1}-\theta_{i}\ll1.
\end{align}
Owing to the assumption, $u(x,\theta)$ and $f(x,\theta)$ are smooth functions concerning the parameter $\theta$. 
Thus, we have the Taylor expansion on a standard basis.
When (\ref{apriori}) is still true at $\theta=\theta_1$, we can manipulate the equation as
\begin{align*}
\Delta_{\theta_1}\hspace{.1em}u(x,\theta_1)&=
\(\Delta_{\theta_0}+i{\pi}e^{i\pi\theta_0}\delta_0\sum_{j=1}^{n-1}\(\partial_j\)^2\)
\(u(x,\theta_0)
+\delta_0\left[\partial_\theta u(x,\theta)\right]_{\theta\rightarrow\theta_0}\)\!,\\
&\simeq f(x,\theta_0)+\delta_0\left[\(
\Delta_{\theta}\partial_\theta+i\pi
e^{i\pi\theta}\partial_\theta\sum_{j=1}^{n-1}\(\partial_j\)^2
\) u(x,\theta)\right]_{\theta\rightarrow\theta_0}\!,\\
&=f(x,\theta_0)+\delta_0\left[\partial_\theta\(
\Delta_{\theta}u(x,\theta)\)\right]_{\theta\rightarrow\theta_0}\!,\\
&=f(x,\theta_0)+\delta_0\left[
\partial_\theta f(x,\theta)\right]_{\theta\rightarrow\theta_0}\!,\\
&\simeq f(x,\theta_1).
\end{align*}
Here, $A\simeq B$ means $\mathcal{O}(|A-B|)=\mathcal{O}(\delta_j^2)$. 
Thus, function $u(x,\theta_1)$ is the solution of the equation (\ref{W22equation}).
We can continue this manipulation for $\theta_i$ until $u(x,\theta_i)$ fulfils the \textit{a priori} estimation (\ref{apriori}) at $\theta=\theta_i$.
Solution $u(x,\theta)$ can be extended until $\theta=\theta_i$. 

Next, we consider the Green's function in the amphometric space.
We set this orthogonal basis in the amphometric space as follows: 
\begin{enumerate}
\item At $\theta=0$, base vectors in $\Omega^p(\TsMg)$ coincide with those in the Euclidean space. 
\item  The term \emph{orthogonal} is used by means of (\ref{ortho}).
\item Completion is taken owing to the absolute norm $\|\aaa\|^{~}_2$ for $p$-form object $\aaa\in\Omega^p_{~}(\TsMg)$.
\end{enumerate}
An $L^2$ space is extended to the amphometric space and denoted as $L^2(\Omega_\theta^p)=\bm{H}_\theta^p\oplus(\bm {H}^p_\theta)^\perp$.

For the Green's operator in the amphometric space, we follow discussions for the Euclidean space in \textbf{section \ref{GreenEuclid})}.
The Poisson equation in the amphometric space is provided as
\begin{align}
\Delta_\theta\FFF&=\sss^\perp\tag*{(\ref{Poissoneqs})$'$}.\label{Poissoneqthe}
\end{align}
We similarly introduce the Green's function in the amphometric space as
\begin{align}
G_\theta:\Omega^p(\TsMg)\rightarrow(\bm {H}^p_\theta)^\perp:\sss\mapsto&\FFF^\perp:=G_\theta(\sss)\notag=\left\{
\begin{array}{cl}
0&\(\sss\in\hspace{.5em}\bm {H}^p_\theta\hspace{1.1em}\)\\
{\Delta_\theta}^{-1}\hspace{.1em}\sss&
\(\sss\in(\bm {H}^p_\theta)^\perp\)
\end{array}
\right.\tag*{(\ref{GreensFe})$'$}\!,\label{GreensFthe}
\intertext{yielding}
\FFF&=G_\theta(\sss^\perp).
\tag*{(\ref{SolofPoissone})$'$}\label{SolofPoissonthe} 
\end{align}
When the Laplace--Beltrami operator inverse exists (up to $\theta_j$, which fulfils (\ref{apriori})), we can extend the  Euclidean Green's function into the amphometric space. 

%
%
\begin{example}(Simple solutions of the Poisson equation I)\label{exPoi}\\
A scalar function in $n$-dimension ($n\geq2$) of a type,
\begin{align}
u_\alpha(\xi,\theta)&:=\({\eta^{~}_{\theta}}_\bcdots\hspace{.1em}\xi^\bcdot\xi^\bcdot\)^\alpha=
\(e^{i\pi\theta}\(\xi^0\)^2-\cdots-\(\xi^{n-1}\)^2\)^\alpha\!,\label{uAlpha}
\end{align}
is considered, where $\bm{\xi}:=(\xi^0,\cdots,\xi^{n-1})$ is a coordinate vector in $\TMg$.   
Direct calculations show the function $u_\alpha(x,\theta)$ is a solution of the Poisson equation of 
\begin{align*}
\Delta_\theta u_\alpha(\xi,\theta)=&f_\alpha(\xi,\theta),
\intertext{where}
f_\alpha(\xi,\theta):=&-2\hspace{.1em}\alpha\(n+2\(\alpha-1\)\)
\({\eta^{~}_{\theta}}_{\bcdots}\hspace{.1em}\xi^\bcdot\xi^\bcdot\)^{\alpha-1}.
\end{align*}
When $\alpha=0$, $u_0(\xi,\theta)$ is a constant function and the equation is trivial.
When $\alpha=-1$, we obtain the Laplace equation with $f_{-1}(\xi,\theta)=0$.
The Laplace--Beltrami operator is a linear operator and the Poisson equation is an inhomogeneous equation concerning the Laplace--Beltrami operator; thus, the general solution $u(\xi,\theta)$ is given like
\begin{align}
u(\xi,\theta)&=u_{-1}(\xi,\theta)
+\(\textrm{a particular solution}\).\label{PoissonGsol}
\end{align}
For $\alpha<0$ and $\theta=0$, $u_\alpha(\xi,\theta)$ has a pole for the light-like vector $\bm{\xi}_0$, which is the vector $\bm{\xi}_0\neq0$ and $g_\theta(\bm{\xi}_0,\bm{\xi}_0)=0$).
On the other hand, $u_\alpha(\xi,\theta)$ has no pole in $\theta\in(-1,1)$.

There is another pair of the Poisson equation such that:
\begin{align*}
u(\xi,\theta)&=\log{
\({\eta^{~}_{\theta}}_{\bcdots}\hspace{.1em}\xi^\bcdot\xi^\bcdot\)}
\intertext{and}
f(\xi,\theta)&=-2(n-2)
\({\eta^{~}_{\theta}}_{\bcdots}\hspace{.1em}\xi^\bcdot\xi^\bcdot\)^{-1}.
\end{align*}
\QED
\end{example}
%
%
\begin{example}(Simple solutions of the Poisson equation I\hspace{-.1em}I)\label{SpheHarmo}\\
The regular and irregular solid harmonics are solutions to the Laplace--Beltrami equation in three-dimensional polar coordinate.
They are expressed using spherical harmonics as
\begin{subequations}\\
(Regular solid harmonics):
\begin{align}
R(\vec{r};l,m)&:=\sqrt{\frac{4\pi}{2l+1}}\hspace{.1em}r^l\hspace{.1em}Y(\vartheta,\varphi;l,m),\label{RSH}
\intertext{(Irregular solid harmonics):}
I(\vec{r};l,m)&:=\sqrt{\frac{4\pi}{2l+1}}\hspace{.1em}r^{-l-1}\hspace{.1em}Y(\vartheta,\varphi;l,m),\label{ISH}
\end{align}
where $Y(\vartheta,\varphi;l,m)$ is spherical harmonics and $\vec{r}:=(r\sin{\vartheta}\cos{\varphi},r\sin{\vartheta}\sin{\varphi},r\cos{\vartheta})$ is a position vector in three-dimension.
 \end{subequations}
We use $\vartheta$ for a polar-angle to distinguish the parameter $\theta$ of the amphometric.
The solid harmonics is also a solution to the four-dimensional Laplace--Beltrami equation since it does not have $t:=x^0$ dependence; thus, the solid harmonics constructs a general solution of the Poisson equation in four-dimensions instead of $u_{-1}(\xi,\theta)$ in (\ref{PoissonGsol}).
\QED
\end{example}

%
%
\section{Fourier transformation and Green's function}\label{Ftrans}
This section gives fundamental solutions to the Poisson equation in the amphometric space utilizing the Fourier transformation and Green's function.
The Fourier transformation requires two versions of inertial manifold\cite{talagrand_2022}.
We introduce an adjoint manifold, namely a \emph{momentum manifold}, associating the inertial manifold.
The inertial manifold is called the \emph{configuration manifold} in this context.
The Fourier transformation is a map between $L^2$ functions in tangent and cotangent bundles for the configuration and momentum manifolds.
Green's function introduced in the preceding section is an inverse of the Laplace--Beltrami operator and provided using the Fourier transformation with the Dirac $\delta$-function (distribution).

\subsection{Fourier transformation in amphometric space}
We introduce a momentum manifold concerning the inertial bundle:
 In this context, the inertial manifold itself, the $n$-dimensional smooth manifold $(\Mg,\bm{\eta}^{~}_\theta)$ with the amphometric, is called as the \emph{configuration manifold}.
The \emph{momentum manifold} $(\tMg,\bm{\eta}^{~}_\theta)$ is another version of the $\theta$-extended Riemannian manifold  sharing the same amphometric $\bm\eta^{~}_\theta$ with the configuration manifold.
Position vectors, vectors pointing to a point in $\TMg$ and $\TsMg$, are introduced as
\begin{align*}
\left\{
\begin{array}{clll}
\p&:=p^\bcdot{\partial}/{\partial\xi^\bcdot}&=p^\bcdot\partial_\bcdot&\in\TMg,\\
\rrr&:=r_\bcdot\hspace{.1em}d\xi^\bcdot&=r_\bcdot\eee^\bcdot&\in\TsMg.
\end{array}
\right.
\end{align*}
Similarly, position vectors in $\TtMg$ and $\TstMg$ are, respectively,
\begin{align*}
\left\{
\begin{array}{clll}
\tbr&:=\tr^\bcdot{\partial}/{\partial\tilde{\xi}^\bcdot}&=\tr^\bcdot\tilde{\partial}_\bcdot&\in\TtMg\\
\tppp&:=\tp_\bcdot d\tx^\bcdot&=\tp_\bcdot{\teee^\bcdot}&\in\TstMg,
\end{array}
\right.
\end{align*}
where $\tilde{\partial}_\bullet:={\partial}/{\partial\tx^\bullet}$ and $\teee^\bullet$ are, respectively, orthonormal bases in $\TtMg$ and $\TstMg$.
A position vector in the momentum manifold is called a momentum vector.
Base vectors $\teee^a$ and $\tilde{\partial}_b$ are dual bases to each other such that $\teee^a\cdot\tilde{\partial}_b=\delta^a_b$.
We assign physical dimension to vector components as $[p]=[\tp]=E$ and $[r]=[\tr]=L$; thus, position vectors have physical dimension of $[\p]=E/L$, $[\rrr]=L^2$, $[\tbr]=L/E$, and $[\tppp]=E^2$, respectively. 

\begin{definition}\textup{(Fourier transformation)}\label{pFT}\\
We introduce a $SO_{\!n}$-invariant bilinear form between a tangent vector and a one-form onject in the configuration space in the configuration space as 
\begin{align*}
\bm{a}&=a^\bcdot\partial_\bcdot\in\TMg,~\bbb=b_\bcdot\eee^\bcdot\in\TsMg\rightarrow
\<\bm{a}|\bbb\>:=a^\bcdot b_\bcdot=\<\bbb|\bm{a}\>\in\C,
\intertext{and similally in the momentum space as}
\tilde{\bm{a}}&=\tilde{a}^\bcdot\partial_\bcdot\in\TtMg,~\tilde\bbb=\tilde{b}_\bcdot\eee^\bcdot\in\TstMg\rightarrow
\<\tilde{\bm{a}}|\tilde\bbb\>:=\tilde{a}^\bcdot\tilde{b}_\bcdot=\<\tilde\bbb|\tilde{\bm{a}}\>\in\C.
\end{align*}
We define a Fourier transformation as a map between $L^2$-functions in the configuration manifold such that: 
\begin{subequations}
\begin{align}
&\pF:L^2(\TsMg){\rightarrow}L^2(\TMg):f(\br){\mapsto}\pF(f)(\bp):=
\int_\TsMg\exp{\(+\frac{i}{\hbar}\<\rrr|\bp\>\)}f(\br)\vvv,\label{FT}
\intertext{and in the momentum manifold:}
&\widetilde{\pF}:L^2(\TstMg){\rightarrow}L^2(\TtMg):
f(\btp ){\mapsto}\widetilde{\pF}(f)(\btr):=
\frac{1}{\(2\pi\hbar\)^{n}}\int_\TstMg
\exp{\(-\frac{i}{\hbar}\<\tbr|\tppp\>\)}
f(\btp)\tvvv,\label{FTi}
\intertext{where}  
&\vvv:=\frac{1}{n!}\epsilon_{\bcdot\cdots\bcdot}\hspace{.1em}
\eee^\bcdot\wedge\cdots\wedge\eee^\bcdot\notag~~\text{and}~~
\tvvv:=\frac{1}{n!}\epsilon_{\bcdot\cdots\bcdot}\hspace{.1em}
\teee^\bcdot\wedge\cdots\wedge\teee^\bcdot\notag
\end{align}
are volume-forms in configuretion and momentum spaces, respectively.
\end{subequations}\QED
\end{definition}
\noindent
The Fourier transformation in the momentum space is defined as the inverse Fourier transformation.
The tangent-cotangent vector bilinear form does not include the metric tensor; thus, it is independent of the $\theta$ value of the amphometric.

An argument of the exponential function has a null dimension.
Physical dimension of the Fourier transformation is 
\begin{align*}
\left[\frac{\pF(f)(\bp)}{f(\br)}\right]=
\left[\frac{f(\btp)}{\widetilde{\pF}(f)(\btr)}\right]=L^n.
\end{align*}


A Fourier transformation (\ref{FT})  has the expression in $\TsMM$ (generally curved) manifold such that:
\begin{align*}
\pF(f)(p)
=&
\int_\TsMM{e^{i\hspace{.1em}{r^{~}_\bcdot}{p^\bcdot}/\hbar}}
\hspace{.2em}f^{~}_{\E}(x)dx^{0}{\wedge}\cdots{\wedge}dx^{n-1},
\intertext{where}
f^{~}_{\E}(x\in\TsMM):=&
\frac{1}{n!}\epsilon^{\mu^{~}_\textsc{e}\cdots\mu^{~}_{n}}
\E_{\mu^{~}_\textsc{e}}^0(x)\cdots\E_{\mu^{~}_{n}}^{n-1}(x)f(x)=\mathrm{det}[\E]
f(x),
\end{align*}
and it is $GL(n)$ invariant.

\begin{definition}\textup{($\CMm$ transformation)}\label{CM}\\
A transformation of vectors from the configuration space to the momentum space (and vice versa), namely the $\CMm$-transformation, is defined as
\begin{align*}
\CMg&:V^1(\TMg){\rightarrow}\Omega^1(\TstMg):\p\mapsto{\CMg}\(\p\):=\tppp,~\<\bp|\tppp\>:=\<\bp|\bp\>^{~}_\theta,\\
\MCg&:V^1(\TtMg){\rightarrow}\Omega^1(\TsMg):
\tilde{\br}\mapsto{\MCg}\(\tr\):=\btr,~\<\br|\btr\>:=\<\br|\br\>^{~}_\theta.
\end{align*}\QED
\end{definition}
\noindent
For a given $\p\in V^1(\TM)$, one-form object $\tppp\in\Omega^1(\TstM)$ fulfiling $\<\bp|\tppp\>:=\<\bp|\bp\>^{~}_\theta={\eta^{~}_\theta}_{\bcdots}p^\bcdot p^\bcdot$ exists.
The transformation has a component representations using the standard basis as
\begin{align}
\tppp&:=\eta^{~}_{\theta\bcdots}\hspace{.1em}{p^\bcdot}\eee^\bcdot~\textrm{and}~~
\btr:=\eta^{\hspace{.2em}\bcdots}_{\theta}\hspace{.1em}{\tilde{r}_\bcdot}\hspace{.1em}\partial_\bcdot.\label{tptr}
\end{align}
\noindent
Coordinate components of position vectors $\tp_a$ and $\tr^a$ with $a\in\{0,\cdots,n-1\}$ are given from $p^a$ and $r_a$ using the amphometric tensor as (\ref{tptr}); thus, they are complex-valued vectors.
We assume that manifold $\tMg$ is a complex manifold covered by open sets homeomorphic to $\C^n$, and two local coordinates are related to each other owing to the holomorphic function.

\begin{figure*}[t] 
\centering
\includegraphics[width={13cm}]{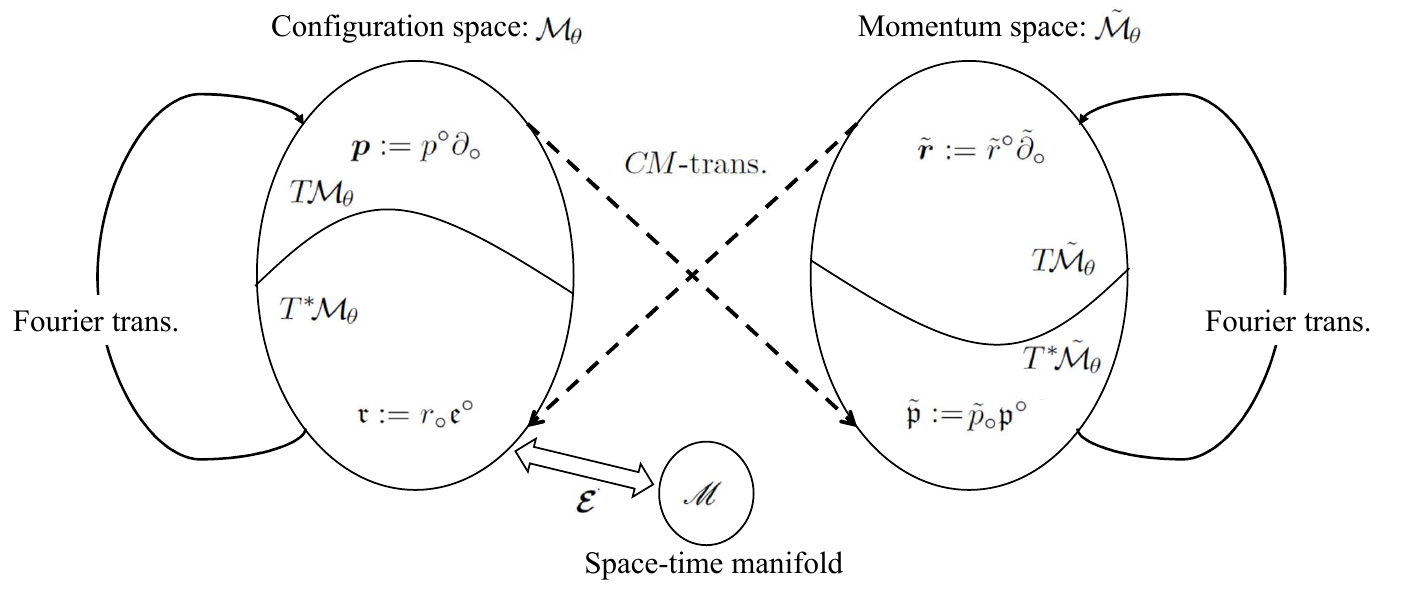}%
 \caption{
A schematic view of configuration and momentum spaces and transformations among their tangent and co-tangent spaces.
}\label{Fourier}
\end{figure*}

\begin{definition}\textup{($C\hspace{-.2em}M$-Fourier(--Laplace) transformation)}\label{FLT}\\
The $C\hspace{-.2em}M$-Fourier transformation and its adjoint transformation are defined as a sequential transformation of Fourier and $\CMm $transformations such that:
\begin{subequations}
\begin{align}
\FTg\hspace{1.em}:=\CMg\bcdot\pF:f(\br){\mapsto}\hspace{1.em}\FTg(f)(\btp ):=&
\left|\deteta\right|\int{e^
{+i{\eta^{~}_\theta}^{\hspace{-.1em}\bcdots}r_\bcdot\tp_\bcdot/\hbar}}f(\br)\vvv,
=:\tilde{f}(\btp ),\label{FL1}\\
=&\left|\deteta\right|\int_\TsMMg{e^
{+i{\eta^{~}_\theta}^{\hspace{-.1em}\bcdots}r_\bcdot\tp_\bcdot/\hbar}}f^{~}_{\E}(\bm{x})
dx^0{\wedge}\cdots{\wedge}dx^{n-1},
\notag\\
\FTg^{-1}:=\MCg\bcdot\widetilde{\pF}:
\tilde{f}(\btp ){\mapsto}\FTg^{-1}(\tilde{f})(\br):=&
\frac{\left|\deteta\right|}{\(2\pi\hbar\)^{n}}\int_\TstMg{e^
{-i{\eta^{~}_\theta}^{\hspace{-.1em}\bcdots}r_\bcdot\tp_\bcdot/\hbar}}\tilde{f}(\btp )\tvvv.
\label{FL2}
\end{align}
\end{subequations}
The Jacobian matrix $\left|\deteta\right|$ is unity for any $\theta$; thus, we omit it in formulae, hereafter.\QED
\end{definition}
\noindent
In summary, we define the $C\hspace{-.2em}M$-Fourier transformation using a two-step approach: the first step is a Fourier transformation from the cotangent bundle to the tangent bundle within configuration or momentum manifold, then the second step is the $C\hspace{-.2em}M$-transformation between two manifolds, as shown schematically in Figure \ref{Fourier}.
The configuration manifold is equivalent to the inertial manifold. 

Suppose $\psi(x)\in\Omega^0(\TsMg)$ is a static wave-function of non-relativistic quantum mechanics and $\hat{\bp}:=\hat{p}^\bcdot\partial_\bcdot\in{V}^1(\TMg)$ is a unit vector such that $\left|{\eta^{~}_{\theta}}_{\bcdots}\hspace{.1em}\hat{p}^\bcdot\hat{p}^\bcdot\right|=1$ and $\hat{p}^0=0$.
We note that parameter $\theta$ does not appear in this system.
Contraction of the wave function concerning the unit vector $\hat{\bp}$ denoted as  $\iota^{~}_{\hat{\bp}}$ provides
\begin{align*}
\iota^{~}_{\hat{\bp}}\psi(\bxi):=\hat{p}^\bcdot\(\frac{\partial\psi(\bxi)}{\partial{\xi}^\bcdot}\);
\end{align*}
thus, $\partial_\bcdot\psi$ yields a three-dimensional momentum vector.
In the configuration space, a momentum operator is a differential operator, and a position operator is a multiplying operator, and vice versa in the momentum space.
Our formulation is compatible with the standard quantum mechanical interpretation.

At $\theta=0$ and $\theta=\pm1$, an integral kernel of transformations (\ref{FL1}) and (\ref{FL2}) has a pure imaginary exponent; thus, they are Fourier transformation and its inverse.
In $\theta\in(-1,1)\setminus\{0\}$, an integral kernel has a real exponent, and they are the Fourier--Laplace transformation, which is an analytic continuation of the Fourier transformation.
The existence of the inverse transformation is ensured after the appropriate deformation of an integration line.
The adjoint transformation (\ref{FL2}) is an inverse Fourier transformation at $\theta=0$ and $\theta=\pm1$; thus, we obtain that
\begin{align*}
\(\FT_{0}(\tilde{f})(\br)\)^{-1}&=\(\FT_{\hspace{-.2em}\pm\hspace{-.1em}1}(\tilde{f})(\br)\)^{-1}=f^{~}_\E(\bx).
\end{align*}

\subsection{Delta distribution and propagator}
\begin{subequations}
We introduced Green's operator to describe a solution of Poisson equation \ref{SolofPoissonthe} in \textbf{Section \ref{GreenOpQ}}.
The Laplace-Beltrami operator inverse is necessary to obtain Green's operator.
Green's operator acting on the Dirac $\delta$-distribution as the source term is called Green's function (or \textit{propagator}) and provides the Laplace-Beltrami operator inverse; thus, it gives the solution to the given Poisson equation.

We define the Dirac $\delta$-distribution in a one-parameter space as
\begin{align}
\delta(x):=&
\frac{1}{2\pi}\int_{-\infty}^{\infty}dk\hspace{.1em}e^{ikx},\label{delta1}
\intertext{and in an $n$-dimensional vector space as}
\delta^n(\bx):=&\prod^{n-1}_{j=0}\delta(x_j),\label{deltan}
\end{align}
where $\bx=\{x_0,\cdots,x_{n-1}\}$.
\end{subequations}
Following remarks immediately follow from the definition:
\begin{remark}\textup{(Properties of the Dirac $\delta$-distribution)}\label{deltaRM}\\
Suppose $a,b,x\in\R$ and $z\in\C$.
\begin{enumerate}
\item For any $\varphi(z){\in}C^0(\C)$, we have 
\begin{align*}
&0\notin[a,b]\implies\int_a^b\varphi(x)\delta(x)dx=0,\\
&0\in[a,b]\implies\int_a^b\varphi(x)\delta(x)dx=
\frac{1}{2\pi i}\oint\frac{\varphi(z)}{z}dz=\varphi(0),
\end{align*}
where an integration contour is a clock wise circle around $z=0$.
\item A derivation of the Dirac $\delta$-distribution yields a recurrence formula such that:
\begin{align*}
x\frac{d^j\delta}{dx^j}(x)=-j\frac{d^{j-1}\delta}{dx^{j-1}}(x),
~~\textrm{where}~~1{\leq}j\in\Z.
\end{align*}
\item Suppose $\bm{F}(\bx)=\(F^0(\bx),\cdots,F^{n-1}(\bx)\)\in\R^n$ is a holomorphic function with 
\begin{align*}
\bm{F}(\bm{0})=\bm{0}\quad\text{and}\quad
\textup{det}[\partial{\bm{F}(\bx)}/\partial{\bx}|_{\bx=\bm{0}}]\neq\bm{0}. 
\end{align*}
Then, we have
\begin{align*}
\delta\(\bm{F}(\bx)\)dx^0{\wedge}\cdots{\wedge}dx^{n-1}&=
\textup{det}\left[\frac{\partial{\bm{F}(\bx)}}{\partial\bx}\right]^{-1}
\delta\(\bx\)
dx^0{\wedge}\cdots{\wedge}dx^{n-1}.
\end{align*}
Thus, functions in the amphometric space yields
\begin{align*}
\delta(\bm{\eta}\bx){\sigmath}dx^0{\wedge}\cdots{\wedge}dx^{n-1}=
\delta\(\bx\){\deteps}dx^0{\wedge}\cdots{\wedge}dx^{n-1}.
\end{align*}
\end{enumerate}
\end{remark}
\noindent

We consider the Poisson equation \ref{Poissoneqthe} for a $p$-form object in the $n$-dimensional amphometric space.
We denote the orthonormal basis in $L_2(\Omega_\theta^p)=\bm{H}_\theta^p\oplus(\bm {H}^p_\theta)^\perp$ as
\begin{align}
\(\vvvp^1,\cdots,\vvvp^N\)
=\(\hhh^1,\cdots,\hhh^m,\bar{\hhh}^1,\cdots,\bar{\hhh}^{N-m}\),\label{orthnor}
\end{align}
where $\{{\hhh}^j\}$ and $\{\bar{\hhh}^j\}$ are orthonormal bases of $\bm {H}^p_\theta$ and $(\bm {H}^p_\theta)^\perp$, respectively, and$N$ is a rank of $\Omega^p(\TsM_\theta)$.
The Poisson equation has a representation using the orthonormal basis (\ref{orthnor}) such that:
\begin{align*}
\Delta_\theta(\bx)\FFF&=\sss^\perp\implies
(-1)^{n(p+1)+1}{\eta^{~}_\theta}^{\bcdots}\(\partial_\bcdot\partial_\bcdot
F_\star\(\bx\)\)\bar{\hhh}^\star=s_\bcdot\(\bx\)\bar{\hhh}^\bcdot,
\intertext{where}
\FFF:=&F_\bcdot\(\bx\)\vvvp^\bcdot=\sum_{a=1}^{m}F_a\(\bx\)\vvvp^a
~\text{and}~
\sss^\perp\hspace{-.1em}\(\bx\):=s_\bcdot\(\bx\)\bar{\hhh}^\bcdot=
\sum_{a=1}^{N-m}s_a\(\bx\)\bar{\hhh}^a.
\end{align*}
We note that $\Delta_\theta\vvvp^a=\Delta_\theta\hhh^a=0$ for any $a\in\{1,\cdots,m\}$.
A propagator is the Green's operator acting on the Dirac $\delta$-distribution.
Due to \ref{GreensFthe}, it is given as
\begin{align*}
\Delta_\theta \GGG_\theta=\delta(\bm{x})\bar{\hhh}\implies
{\GGG_\theta}=G_\bcdot\(\delta(\bx)\)\vvvp^\bcdot:=
\left\{
\begin{array}{cc}
0&\(1{\leq}a{\leq}m\),\\
\Delta_\theta^{-1}(\bx)\delta(\bx)\bar{\hhh}^{a-m}&
\(m+1{\leq}a{\leq}N\),
\end{array}
\right.
\end{align*}
where $\bx:=\{x^1,\cdots,x^{N-m}\}$ and $a\in\{1,\cdots,N\}$.
A formal solution of equation \ref{Poissoneqthe} is provided as
\begin{align*}
&\FFF\(\bx\)=\FFF^h\(\bm{x}\)+
\(\int_{D(\bm{y})}\hspace{-.2em}d\bm{y}\hspace{.2em}
{G_\theta}\(\delta(\bx-\bm{y})\vvvp^\bcdot\)
s_\bcdot\(\bm{y}\)\)\bar{\hhh}^\bcdot,
\intertext{yielding}
&\Delta_\theta\(\bx\)\FFF\(\bx\)=\int_{D(\bm{y})}\hspace{-.2em}d\bm{y}\hspace{.2em}
\delta(\bx-\bm{y})\sss^\perp\!\(\bm{y}\)=\sss^\perp\!\(\bx\),
\end{align*}
where $d\bm{y}:=dy^1{\wedge}\cdots{\wedge}dy^{N-m}$ and $D(\bm{y})$ is a domain of $\bm{y}$ with $D(\bx){\subseteq}D(\bm{y})$.

\begin{example}(The fundamental solution of the Klein--Gordon equation)\label{KGeq}\\
We consider the Klein--Gordon equation in the four-dimensional amphometric space with $\theta\in(-1,1)$.
For scalar function $\phi(\br)\in\Omega^0\(\TsMg\)$, the Klein--Gordon equation is provided as
\begin{align}
\(\Delta_\theta\(\br\)+\(\mu/\hbar\)^2\)\phi(\br)=&-\frac{1}{\hbar^2}f\(\br\),\label{KGeq}
\end{align}
where $0<\mu\in\R$ is a real constant, namely a particle mass in physics, and $f(\br)$ is a real-valued function.
Factor $\hbar^{-2}$ at the right-hand side of (\ref{KGeq}) is just a convention. 
In physics, operator $\Delta_\theta(r)$ has a physics dimension $\fdl\Delta_\theta(r)\fdr=L^{-2}$, and a particle mass has the energy dimension $\fdl\mu\fdr=E$.
Though the Planck constant adjusts physical dimensions in formulae, e.g., $\fdl\Delta_\theta(r)\fdr=\fdl\mu^2/\hbar^2\fdr=L^{-2}$, the theory is still classical.
The scalar function $\phi(r)$ has the energy dimension; thus, $f(\br)$ has $E^3$ dimension, keeping both sides of (\ref{KGeq}) to be the same physical dimension of $EL^{-2}$.

Suppose function $\phi(\br)$ is given as the $\CMg$-Fourier--Laplace transformation of $\tilde\phi(\btp)\in\Omega^0(\TstMg)$ as
\begin{align*}
\phi(\br):=&\CMg\left[\frac{1}{\(2\pi\hbar\)^{4}}
\int_\TstMg{e^{-i{\eta^{~}_\theta}^{\hspace{-.1em}\bcdots}
r_\bcdot\tp_\bcdot/\hbar}}\tilde{\phi}(\btp)\tvvv\right]\in\Omega^0\(\TsMg\),
\intertext{yielding}
\(\hbar^2\Delta_\theta(\br)+\mu^2\)\phi(\br)
&=\frac{1}{\(2\pi\hbar\)^{4}}
\CMg\left[
\int_\TstMg\(-{\eta^{~}_\theta}^{\hspace{-.1em}\bcdots}{\tp_\bcdot}{\tp_\bcdot}+\mu^2\)
{e^{-i{\eta^{~}_\theta}^{\hspace{-.1em}\bcdots}
r_\bcdot\tp_\bcdot/\hbar}}\hspace{.2em}
\tilde{\phi}(\btp)\tvvv\right]\!.
\end{align*}
We obtain Green's function in the configuration space by means of the $\CMg$-Fourier transformation of Green's function in the momentum space such that:
\begin{align}
{G_\theta}\(\delta^4(\br)\)&=\CMg\left[\frac{1}{(2\pi\hbar)^4}\int_\TstMg
e^{-i{\eta^{~}_\theta}^{\hspace{-.1em}\bcdots}r_\bcdot\tp_\bcdot/\hbar}\hspace{.1em}
{G_\theta}(\btp)\tvvv\right]\!,\label{KGprop}
\end{align}
where Green's function in the momentum space is
\begin{align*}
{G_\theta}(\btp)&:=
\({-{\eta^{~}_\theta}^{\hspace{-.1em}\bcdots}{\tp_\bcdot}{\tp_\bcdot}+\mu^2}\)^{-1}\in\Omega^0(\TstMg).
\end{align*}
We note that integrations (\ref{KGprop}) is well-defined since ${\eta^{~}_\theta}^{\hspace{-.1em}\bcdots}{\tp_\bcdot}{\tp_\bcdot}-\mu^2\neq0$ in $\theta\in(-1,1)$.
In reality, we obtain from (\ref{KGprop}) with the definition (\ref{delta1}) that
\begin{align*}
(\ref{KGprop})\implies\(\hbar^2\Delta_\theta\(\br\)+\mu^2\){G_\theta}\(\delta^4(\br)\)&=\CMg\left[
\frac{1}{(2\pi\hbar)^4}\int_\TstMg
{e^{-i{\eta^{~}_\theta}^{\hspace{-.1em}\bcdots}r_\bcdot\tp_\bcdot/\hbar}}\hspace{.1em}\tvvv\right]=\delta^4(\br),
\end{align*}
When we expand the amphometric around $\theta\!=\!1$, namely the  \textit{quasi-Lorentzian metric}, we obtain that
\begin{align}
[\bm{\eta}^{-1}_\theta]^{ab}&\xrightarrow{\theta\rightarrow1-\delta_0}[\bm{\eta}^{-1}_\delta]^{ab}:=
\textup{diag}(1,-1+i\delta^{~}_0,-1+i\delta^{~}_0,-1+i\delta^{~}_0)+{\cal O}\(\delta_0^2\), ~0<\delta_0\ll1,\label{etaeps}
\intertext{yielding Green's function (\ref{KGprop}) as}
{G_\textsc{F}\!}\(t,\vec{r}\)&:=\left.{G_\theta}\(\delta(\br)\)
\right|_{\theta=1-\delta_0}=
\frac{1}{(2\pi\hbar)^3}
\int\hspace{-.7em}\int\hspace{-.7em}\int_{-\infty}^{+\infty}d\vec{p}\hspace{.2em}
e^{i\vec{r}\cdot\vec{\tp}/\hbar}\(
\lim_{\delta_0\rightarrow+0}\frac{1}{2\pi\hbar}
\int_{-\infty}^{+\infty}d\tilde{E}\hspace{.2em}I_{\hspace{-.2em}\tilde{E}}\),
\intertext{where}
I_{\hspace{-.2em}\tilde{E}}&:=
\frac{{e^{-it\tilde{E}/\hbar}}{e^{t\tilde{E}\delta_0/\hbar}}}
{-\tp^{~}_0\tp^{~}_0+\vec{\tp}\cdot\vec{\tp}+\mu^2-i\delta_0},
\intertext{with}
\btp&:=(\tilde{E},\vtp)=\(\tilde{E},\tp_1,\tp_2,\tp_3\),~
\btr:=(t,\vr)=\(t,r_1,r_2,r_3\).
\end{align}
and  $\vec{r}\cdot\vec{\tp}$ gives an inner product of two three-dimensional spacial vectors owing to the Euclidean metric.
We note that the imaginary part of a denominator is $-i\hspace{.1em}\textrm{sign}[\tilde{E}^2]\hspace{.1em}\delta_0=-i\delta_0$.
A contour integration after extending $\tilde{E}$ to a complex variable $\tilde{E}$ as Re$[\tilde{E}]=\tilde{E}$ provides the $\tilde{E}$ integral such that: 
\begin{align}
\frac{1}{2\pi\hbar}\int_{-\infty}^{+\infty}d\tilde{E}\hspace{.2em}I_{\hspace{-.2em}\tilde{E}}=\frac{I_{\!j}}{2\pi\hbar}
\lim_{\rho\rightarrow\infty}\oint_{c^\rho_j}d\tilde{E}\hspace{.2em}I_{\hspace{-.2em}\tilde{E}}\label{Fint}
\end{align}
with
\begin{align*}
\left\{c^\rho_j,I_{\!j}\right\}=\left\{
\begin{array}{lr}
\left\{c^\rho_1,-1\right\},&t>0\\
\left\{c^\rho_2,+1\right\},&t<0
\end{array}
,\right.
\end{align*}
where contour $c^\rho_1$ ($c^\rho_2$) is along a real axis with $-\rho<\tilde{E}<\rho$ and a lower (upper) half-circle with a radius of $\rho$, respectively, as shown in Figure \ref{feynmen}.
\begin{figure}[t] 
\centering
\includegraphics[width={4cm}]{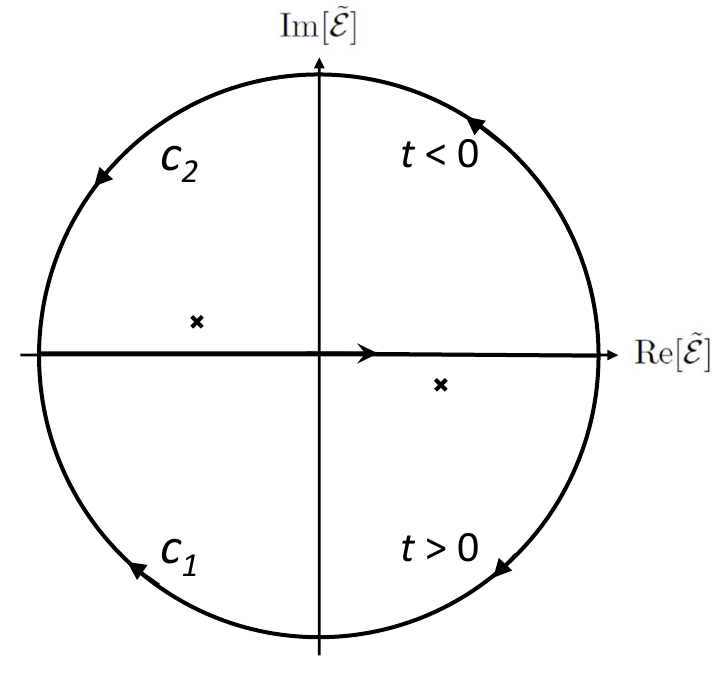}%
 \caption{
The integration contour of the Feynman propagator.
Contour $C_1$ ($C_2$) is taken to (counter-)clockwise.
Two little crosses show a position of poles at $\tilde{E}=\pm\tilde{E}(\vec{\tp}\hspace{.2em},\delta_0)$.
}\label{feynmen}
\end{figure}
Straightforward calculations provide
\begin{align*}
\text{(\ref{Fint})}=&\frac{i\hspace{.1em}e^{-i\left|t\right|
\tilde{E}(\vec{\tp}\hspace{.1em},\delta_0)/\hbar}}
{2\tilde{E}(\vec{\tp}\hspace{.1em},\delta_0)},
\intertext{where}
\tilde{E}(\vec{\tp}\hspace{.2em},\delta_0):=&\sqrt{\vec{\tp}\cdot\vec{\tp}+\mu^2}
-\frac{i\delta_0}{2\sqrt{\vec{\tp}\cdot\vec{\tp}+\mu^2}}+{\cal O}(\delta_0^2)\simeq
\sqrt{\vec{\tp}\cdot\vec{\tp}+\mu^2}-i\delta_0.
\end{align*}
Thus, we obtain Green's function after taking $\delta_0\rightarrow+0$ as
\begin{align*}
{G_\textsc{F}\!}\(t,\vec{r}\):=&\frac{i}{(2\pi\hbar)^3}
\int\hspace{-.7em}\int\hspace{-.7em}\int_{-\infty}^{+\infty}
\frac{d\vec{p}}{2\tilde{E}(\vec{\tp}\hspace{.2em})}
e^{-i\(\left|t\right|\tilde{E}(\vec{\tp}\hspace{.1em})-\vec{r}\cdot\vec{\tp}\)/\hbar},
\end{align*}
where $\tilde{E}(\vec{\tp}\hspace{.1em}):=\tilde{E}(\vec{\tp}\hspace{.1em},1)$; that is referred to as the \emph{Feynman propagator} in physics.
The fundamental solution of the Klein--Gordon equation (\ref{KGeq}) is provided using the Feynman propagator  as
\begin{align*}
\phi(r)=&\phi_0(r)
+\int_{D\(f(r_0)\)}
{G_F}\(t-t_0,\vec{r}-\vec{r}_0\)
f(r_0)\hspace{.1em}d^4r_0,
\end{align*}
where $\phi_0(r)$ is a solution of a homogeneous equation such that $\(\hbar^2\Delta_\theta(r)+\mu^2\)\phi_0(r)=0$. 
\QED
\end{example}
\noindent
We note that the Feynman propagator appears naturally as the propagator in the amphometric space at the $\theta\rightarrow1-0$ limit and is compatible with the causality required from quantum physics.

%
%
\section{Green's functions of gravity}\label{GinG}
\subsection{Linearised vacuum Einstein-equation}
This section discusses Green's function concerning the Einstein equation.
The current author proposed the canonical quantisation of Einstein's gravity in the previous study\cite{doi:10.1140/epjp/s13360-021-01463-3}.
This study provides propagators of spin-connection and vierbein fields.
The common understanding of the geometrical gauge theory is that a connection corresponds to a gauge field, and a curvature corresponds to field strength.
Moreover, a matter field (a Dirac spinor field) in the Yang--Mills theory corresponds to the vierbein form in general relativity since they are sections in $\TsM$.
Therefore, we apply a gauge-fixing condition on the spin-connection field $\omega$.
Although the standard formulation of general relativity does not have a counterpart of a coupling constant in the Yang--Mills theory, this study formulates the $SO_4$ covariant differential with the coupling constant $\cG$ to discuss two theories in parallel.

The Euler--Lagrange equation of motion owing to the Einstein--Hilbert Lagrangian provides the torsion equation and the Einstein equation such that\cite{Kurihara:2022sso,Kurihara:2025tro}:
\begin{align}
&\left(d\eee^a+\cG\hspace{.1em}\www^{a}_{~\bcdot}\wedge\eee^\bcdot\right)=\TTT^a,\label{torsion}
\intertext{and}
\frac{1}{2}\epsilon_{a\bcdots\bcdot}\RRR^\bcdots\wedge\eee^\bcdot&=
\frac{1}{2}\epsilon_{a\bcdots\bcdot}
\(d\www^\bcdots+\cG\www^\bcdot_{~\star}\wedge\www^{\star\bcdot}\)
\wedge\eee^\bcdot
={\eta^{~}_{\theta}}_{a\bcdot}\hspace{.2em}\EEE^{\bcdot}\!.\label{EoM2}
\end{align}
where $\TTT^a$ and  $\EEE^a$ are the torsion two-form and the stress-energy three-form, respectively, owing to matter and gauge fields.
We have provided concrete expressions of the torsion and the stress-energy forms in Ref.\cite{Kurihara:2025tro}.
We set  the cosmological constant to zero for simplicity.
A component representation of (\ref{EoM2}) in the amphometric space is
\begin{align}
R^{ab}-\frac{1}{2}R\hspace{.2em}{\eta^{~}_\theta}^{\hspace{-.1em}ab}&=
\Ri^{\bcdot a}_{\hspace{.7em}\bcdot \star}{\eta^{~}_\theta}^{\hspace{-.1em}b\star}
-\frac{1}{2}\Ri^{\bcdot\star}_{\hspace{.7em}\bcdot\star}{\eta^{~}_\theta}^{\hspace{-.1em}ab}
=\frac{1}{2}\left(\frac{1}{3!}\epsilon^{a\bcdots\bcdot}\left[\EEE^{b}\right]_{\bcdots\bcdot}
+(a{\leftrightarrow}b)\right).\label{EinsteinEq1}
\end{align}
The Bianchi identity (\ref{Bicnchi2}) and the torsion-less (\ref{torsion}) in the vacuum immediately gives that
\begin{align}
\partial_\bcdot\(R^{a\bcdot}-\frac{1}{2}R\hspace{.2em}{\eta^{~}_\theta}^{\hspace{-.1em}a\bcdot}\)&=
\partial^\bcdot\Ri^{a\star}_{\hspace{.7em}\bcdot\star}
-\frac{1}{2}\partial^a\Ri^{\bcdot\star}_{\hspace{.7em}\bcdot\star}=0,\label{EinsteinEq2}
\end{align}
where $\partial^a:={\eta^{~}_\theta}^{\hspace{-.1em}a\bcdot}\partial_\bcdot$.
We used anti-symmetry of $\Ri$ concerning both pairs of upper and lower indexes. 
Owing to the structure equation, we obtain an equation of the spin-connection (Bianchi identity) as
\begin{align}
0=&
\partial^\bcdot\(
\partial_\bcdot\omega_{\star}^{~a\star}-\partial_\star\omega_{\bcdot}^{~a\star}
+\cG
\(
\omega_\bcdot^{~a\bm{\ast}}\omega_\star^{~\bm{\ast}\star}
-\omega_\star^{~a\bm{\ast}}\omega_\bcdot^{~\bm{\ast}\star}\)
{\eta^{~}_{\theta}}_{\bm{\ast}\bm{\ast}}
\)
-\partial^a\(
\partial_\star\omega_\bcdot^{~\star\bcdot}
+\cG\hspace{.1em}\omega_\bcdot^{~\bcdot\bm{\ast}}\omega_\star^{~\bm{\ast}\star}
{\eta^{~}_{\theta}}_{\bm{\ast}\bm{\ast}}
\)\notag\\
=&\(\partial^\bcdot\partial_\bcdot\omega_{\star}^{~a\star}-
\partial^\bcdot\partial_\star\omega_{\bcdot}^{~a\star}-
\partial^a\partial_\star\omega_\bcdot^{~\star\bcdot}\)+
\cG\partial^\bcdot\(
\omega_\bcdot^{~a\bm{\ast}}\omega_\star^{~\bm{\ast}\star}-
\omega_\star^{~a\bm{\ast}}\omega_\bcdot^{~\bm{\ast}\star}-
\delta^a_\bcdot\omega_\star^{~\star\bm{\ast}}\omega_\star^{~\bm{\ast}\star}
\)
{\eta^{~}_{\theta}}_{\bm{\ast}\bm{\ast}}.\label{EinsteinEq3}
\end{align}

In the Yang--Mills theory, one of the standard gauge fixing conditions on a gauge field $A^\bullet$ is the covariant gauge such that $\partial_\mu A^\mu=0$.
Together with the massless condition $g_{\mu\nu}A^\mu A^\nu=0$, two degrees of freedom remain out of four components of the gauge field.
We propose a similar gauge condition on the spin-connection as
\begin{align}
\partial_\bcdot\omega_b^{~a\bcdot}&=0,\label{dwxero}
\end{align}
which provides $16$ constraints on a total $24$ degrees of freedom in $\omega_\mu^{~ab}$.
In addition, the Einstein equation (\ref{EinsteinEq1})  provides six constraints (an on-shell condition); thus, two physical degrees of freedom remain in our theory.

The fundamental solution of the equation of motion with a null coupling constant provides a propagator of the \emph{free} spin-connection field.
Then, the perturbation method concerning the coupling constant $\cG$ provides a more precise solution to the equation of motion.
In reality, an expansion coefficient of a quantum field theoretical scattering matrix is $\alpha_{\hspace{-.1em}g\hspace{-.1em}r}:={{c^{2}_{\hspace{-.1em}g\hspace{-.1em}r}}}/4\pi$\footnote{
A coupling constant in the Yang-Mills theory has a physical dimension (e.g., C (Coulomb) for the electric charge), and the dimensionless expansion coefficient is $\alpha=e^2/(4\pi\epsilon_0\hbar c)$ in SI.
On the other hand, gravitational coupling $\cG$ itself is dimensionless.
}.
For the standard value of $\cG=1$, the expansion coefficient for gravity is $\alpha_{\hspace{-.1em}g\hspace{-.1em}r}\approx0.0796$, which is smaller than strong interaction ($\alpha_s\approx0.1$), but larger than electromagnetic interaction ($\alpha\approx0.00730$). 
Equation (\ref{EinsteinEq3}) under the \emph{free} spin-connection field approximation with gauge fixing and massless conditions is the simple Laplace equation such that:
\begin{align}
\(\Delta_\theta[\bm\omega^{~}_0]_{\bcdot}^{~a\bcdot}\)\(\xi\in\TMg\)&=0,\label{deltaomega}
\end{align}
$\bm\omega^{~}_0$ is the \emph{free} spin-connection field fulfilling gauge fixing and massless conditions.
The fundamental solution of (\ref{deltaomega}) with quasi-Minkowski metric (\ref{etaeps}) is
\begin{align}
{G^a_{\omega}}(\bm{\xi}):=[\bm\omega^{~}_0(\bm{\xi})]_{\bcdot}^{~a\bcdot}=\frac{\tilde\omega^a}
{-(\xi^0)^2+(\xi^1)^2+(\xi^2)^2+(\xi^3)^2-i\delta_0},
\end{align}
owing to (\ref{uAlpha}) for a massless particle ($\mu=0$), where $\tilde\omega^a$ is a polarization vector of the spin-connection field, which are two independent constant vectors.
The spin-connection field is a rank-two massless tensor field corresponding to a spin-two boson. 

Next, we consider Green's function of the vierbein field.
The torsion two-form has a component representation in $\TsM$ and $\TsMM$ using the standard bases, such as  
\begin{align*}
\TTT^a&=:\frac{1}{2}\TT^a_{\hspace{.6em}\bcdots}\hspace{.2em}\eee^\bcdot\wedge\eee^\bcdot=
\frac{1}{2}\TT^a_{\hspace{.6em}\bcdots}\hspace{.2em}\E^\bcdot_\mu\E^\bcdot_\nu
\hspace{.2em}dx^\mu{\wedge}dx^\nu\!.
\end{align*}
Torsion equation (\ref{torsion}) is represented using this component representation such as
\begin{align*}
\TT^a_{\hspace{.6em}bc}&=\(\partial^{~}_\mu\E^a_\nu-\partial^{~}_\nu\E^a_\mu+
\cG\(\omega_\mu^{~a\bcdot}\E^\bcdot_\nu-
\omega_\nu^{~a\bcdot}\E^\bcdot_\mu\)
{\eta^{~}_{\theta}}_{\bcdots}\)
\E_b^\mu\E_c^\nu,\notag\\&=\E^a_\mu
\(\partial^{~}_c\E_b^\mu-
  \partial^{~}_b\E_c^\mu\)+
\cG\({\omega_b^{~a}}_c-{\omega_c^{~a}}_b\)=0,
\end{align*}
where ${\omega_b^{~a}}_c:=\omega_b^{~a\bcdot}{\eta^{~}_{\theta}}_{\bcdot c}$.

Divergence-less concerning an upper index of the torsion form and gauge fixing condition (\ref{dwxero}) gives constraints such as
\begin{align*}
\partial^{~}_\bcdot\TT^\bcdot_{\hspace{.6em}bc}&=
\partial_\mu\(\partial^{~}_c\E_b^\mu-
  \partial^{~}_b\E_c^\mu\)
+
\(\partial^{~}_c\E_b^\mu-
  \partial^{~}_b\E_c^\mu\)
\E_a^\nu\partial_\nu\E^a_\mu,\notag\\
&=\partial^{~}_c\partial^{~}_\mu\E^\mu_b-\partial^{~}_b\partial^{~}_\mu\E^\mu_c
-\E^a_\nu\(\partial^{~}_c\E_b^\nu-
  \partial^{~}_b\E_c^\nu\)
\partial_\mu\E_a^\mu.
\end{align*}
Gauge fixing condition  (\ref{dwxero}) yields constraints equivalent to the de\hspace{.1em}Donder condition on the vierbein such that:
\begin{align}
\partial^{~}_\bcdot\TT^\bcdot_{\hspace{.6em}bc}=0\land\text{(\ref{dwxero})}\implies( \partial^{~}_\mu\E^\mu_\bullet)=0.\label{deDonder2}
\end{align}
On the other hand, divergence concerning a lower index provides an equation of motion for the vierbein with the constraint (\ref{deDonder2}) as
\begin{align*}
0&=g_{~}^{\hspace{.2em}\mu\nu}\partial^{~}_{\mu}\TT^a_{\hspace{.6em}\nu\rho}\\&=
\Delta(x)\E^a_\rho-g_{~}^{\hspace{.2em}\mu\nu}\partial^{~}_\rho\partial^{~}_{\mu}\E^a_\nu\\&\hspace{5.1em}
+\cG\hspace{.2em}g_{~}^{\hspace{.2em}\mu\nu}\partial_\mu^{~}
\(\omega_\nu^{~a\bcdot}\E^\bcdot_\rho-
\omega_\rho^{~a\bcdot}\E^\bcdot_\nu\)\eta_{\bcdots},\\
&\xrightarrow{~\cG=0~}\Delta(x)\left[\E^0\right]^a_\nu=0;
\end{align*}
thus, an equation of motion under the \emph{free} vierbein-field approximation is again the simple Laplace equation such that:
\begin{align}
\(\Delta\left[\E^0\right]^a_\nu\)(x\in\TMM)=0.\label{deltaE}
\end{align}
The fundamental solution of (\ref{deltaE}) with the quasi-Minkowski metric is
\begin{align*}
{G^a_{\E}}(x):=\left[\bm\E^0\right]_{\mu}^{a}=\frac{{\E}_{\mu}^{a}}
{-(x^0)^2+(x^1)^2+(x^2)^2+(x^3)^2-i\delta_0},
\end{align*}
where ${\E}_{\mu}^{(a)}$ is a polarization vector of the vierbein field.
The vierbein field is a massless tensor field; the upper Roman index is a component in the inertial manifold, and the lower Greek index is a component in the global manifold.
We interpret the vierbein field as the global vector equipping the local $SO_4$ gauge symmetry.
The lower index represents a vector component concerning global space-time. The upper index points to an inner degree of freedom corresponding to two polarizations of the local $SO_4$ gauge symmetry.
Under this interpretation, we represent the vierbein one-form defined in $\TsMM$ as
\begin{align*}
{\E}_{\mu}^{a}\rightarrow{\E}_{\mu}^{(a)}\implies
{\eee}^{(a)}:={\E}^{(a)}_\mu\hspace{-.1em}(\bm{x})\hspace{.2em}dx^\mu,
\end{align*}
where $a=1,2$ shows the inner degree of freedom two as independent polarization vectors.
We can construct a spin-two polarization state of the metric tensor from the vierbein field using the coupling of angular momenta such as $|2,\pm2\rangle=|\pm1\rangle\otimes|\pm1\rangle$.
The vierbein has no spin-zero state of $|\pm1,0\rangle$ since it is massless; thus, the metric tensor does not have $|2,\pm1\rangle$ nor $|2,0\rangle$.

The metric tensor is a symmetric tensor with ten degrees of freedom in four-dimension.
On the other hand, $\E^a_\mu$ is not necessarily a symmetric tensor and has 16 degrees of freedom in the four-dimensional space.
A local $SO_4$ symmetry provides six constraints on the vierbein field; thus, the same number of degrees of freedom as the metric tensor remains in the vierbein field.
In addition, four gauge conditions (\ref{deDonder2}) and four torsion-less conditions provide eight additional constraints.
Consequently, the vierbein field has two physical degrees of freedom, as we desired.

%
%
\begin{example}(A plane wave solution on a flat space-time)\label{pw}\\
We consider a polarization vector for a plane wave solution in flat space-time.
A solution propagating along, e.g.,  an $x^1$-axis, has a polarization tensor such that:
\begin{align*}
&{\E}_{\mu}^{(a)}:=\frac{1}{\sqrt{2}}
\left(
\begin{array}{cccr}
1&i&0&0\\
i&1&0&0\\
0&0&1&1\\
0&0&1&-1
\end{array}
\right)\\&\implies
{g^{~}_{\theta}}_{\mu\nu}={\eta^{~}_\theta}_\bcdots{\E}_{\mu}^{(\bcdot)}{\E}_{\nu}^{(\bcdot)}=
\textup{diag}\(1,e^{i\pi\theta},e^{i\pi\theta},e^{i\pi\theta}\),
\end{align*}
with a \emph{circular} polarization.
This polarization vector yields the spin-connection field such that: 
\begin{align*}
\omega^{(a)}=\frac{1}{\sqrt{2}}
\(0,0,\omega_{2}^{~23}-\omega_{3}^{~23},-\omega_{2}^{~23}-\omega_{3}^{~23}\).
\end{align*}
We note that states with two polarization vectors, 
\begin{align*}
\eee^{(\lambda=\bm{+})}&:={\E}_{\mu}^{(2)}dx^\mu=\frac{1}{\sqrt{2}}\(dx^2+dx^3\)
\intertext{and }
\eee^{(\lambda=\bm{\times})}&:={\E}_{\mu}^{(3)}dx^\mu=\frac{1}{\sqrt{2}}\(dx^2-dx^3\),
\end{align*}
have a dynamic degree and are physically observable in the quantum field theory, similar to the polarization vector of photon in QED.
The upper index of the vierbein corresponds to an index to specify two transverse, one longitudinal and one scalar photons of the photon polarization vector.
On the other hand, two polarization vectors of the spin-connection field have no counterparts in QED and are two independent vector functions:
\begin{align*}
\www^{(\lambda=1)}&:=\omega_{2}^{(1)}\hspace{.1em}\eee^{2}=
\omega_{\mu}^{(1)}[\E^{-1}]^\mu_2\hspace{.1em}dx^2=
\omega^{(1)}dx^2
\intertext{and }
\www^{(\lambda=2)}&:=\omega_{3}^{(2)}\hspace{.1em}\eee^3=
\omega_{\mu}^{(2)}[\E^{-1}]^\mu_3\hspace{.1em}dx^3=
\omega^{(2)}dx^3,
\intertext{where }
\omega^{(1)}&:=\frac{1}{{\sqrt{2}}}\(\omega_{2}^{(1)}+\omega_{3}^{(1)}\)
\intertext{and}
\omega^{(2)}&:=\frac{1}{{\sqrt{2}}}\(\omega_{2}^{(2)}+\omega_{3}^{(2)}\).
\end{align*} 
Consequently, we obtain Green's functions in the momentum space such that
\begin{align*}
{\widetilde{G}^{(\lambda)}_\E}(\btp)&:=\frac{\tilde\E_a^{(\lambda)}}
{-\tp^{~}_0\tp^{~}_0+\vec{\tp}\cdot\vec{\tp}-i\delta_0},\hspace{2em}\lambda=(\bm{+},\bm{\times}),
\intertext{and}
\widetilde{G}^{(\lambda)}_\omega(\btp)&:=\frac{\tilde\omega^{(\lambda)}}
{-\tp^{~}_0\tp^{~}_0+\vec{\tp}\cdot\vec{\tp}-i\delta_0},\hspace{2em}\lambda=(1,2),
\end{align*}
in the quasi-Minkowski metric.
Polarization vectors $\tilde\E_a^{(\lambda)}$ are components of the vector defined in $\TstMg$ such that: 
\begin{align*}
\tilde\eee^{+}&=\tilde{\E}_{\bcdot}^{(+)}\hspace{.1em}d\tp^\bcdot=\frac{1}{\sqrt{2}}
\tilde{\E}(\btp)\(d\tp^2+d\tp^3\)
\intertext{and }
\tilde\eee^{\times}&=\tilde{\E}_{\bcdot}^{(\times)}\hspace{.1em}d\tp^\bcdot=\frac{1}{\sqrt{2}}
\tilde{\E}(\btp)\(d\tp^2-d\tp^3\),
\end{align*}
where $\tilde{\E}$ is the $C\hspace{-.2em}M$-Fourier transformation of unity.
Similarly, $\tilde\omega^{(\lambda)}(\btp)$ are given as the $C\hspace{-.2em}M$-Fourier transformation of $\omega^{(\lambda)}(\bm{x})$.
As  a result, $\hat{G}^{(\lambda)}_\bullet$ are Green's functions  in a momentum space as a distribution\cite{talagrand_2022}.
\QED
\end{example}

\subsection{Green's function owing to exact solutions}
This section treats vierbein and spin-connection fields as classical background fields in the quantum field theory. 
In the Yang--Mills theory with $SU(N)$ gauge group in curved space-time, a classical equation of motion of spinor field $\psi$ is   
\begin{align*}Xre
&~\hbar\left(i\gamma^\bcdot\partial_\bcdot
-\frac{1}{2}\cG\hspace{.1em}\gamma^\bcdot\E_\bcdot^\mu\omega_\mu^{~\stars}{S}_\stars
+{ic^{~}_\SU}\hspace{.1em}\gamma^\bcdot\Aa^I_\bcdot t_I
\right)\psi={\mu}\hspace{.1em}\psi,~\text{with}~
S^{ab}:=\frac{i}{4}[\gamma^a,\gamma^b],
\end{align*}
where $\Aa^I_a$ are the connection of the gauge field  with coupling constant ${c^{~}_\SU}$, $\mu$ is a mass of the spinor field, $\gamma^\bullet$ are  Dirac matrices, $S^\bullets$ are generators of a $spin(1,3)$ group, and  $t_I$ is the generator of $SU\!(\!N\!)$-gauge group.
We note that all fields $\psi$, $\E\omega$, $\Aa$ and differential operator $\partial_a=\partial/\partial{\xi^a}$ are defined in $\TMg$. 
A definition of the Dirac gamma matrices in the amphometric space is given in Ref.\cite{Kurihara:2025tro}.
In the quantum field theory, the standard perturbation calculation method (Feynman-diagram method) treats all fields in the momentum space.
We can utilize the standard method of the quantum Yang--Mills theory in classical curved background space-time when we have vierbein and spin-connection fields as exact solutions of the Einstein equation in the momentum space.  

In a semi-classical theory with a gravitational background, the quantum field theory treats a gravitational field as a classical external field using the exact solution of the Einstein equation.
Vierbein and spin-connection fields in a configuration space are Fourier-transformed to Green's function in momentum space and interact with quantum matter and gauge fields.
We consider Green's functions of vierbein and spin-connection fields of the Schwarzschild solution:
%
%
\begin{figure*}[t] 
\centering
\includegraphics[width={15cm}]{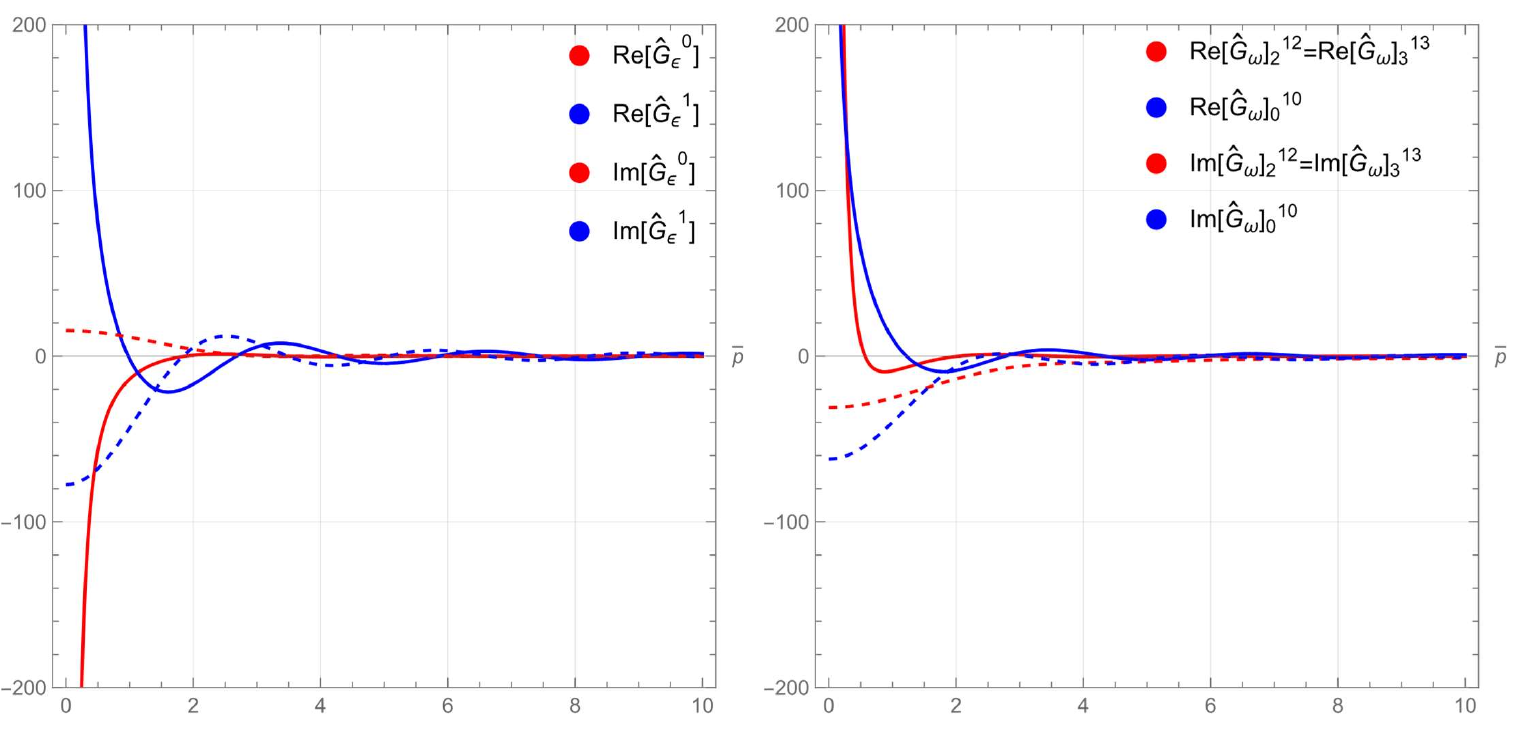}%
 \caption{
Green's functions of the vierbein (left) and the spin-connection (right) in the momentum space for the Schwarzschild solution
as a function of $\bar{p}$.
Numerical values are normalised by a factor $(2\pi\hbar)^2$ with $2\Rs=\delta{E}=1$.
}
\label{Greens}
\end{figure*}
\begin{example}(Schwarzschild solution)\label{schw}
The vierbein forms of the Schwarzschild solution using polar-coordinate $d\rho^\mu=(dt,dr,d\vartheta,d\varphi)$ in $\TsMg$ are 
\begin{subequations}
\begin{align}
&\eee^a=\E^a_\mu\hspace{.1em}dx^\mu,\notag
\intertext{where}
&\left\{
\begin{array}{clcl}
\eee^0&=\fsch(r)\hspace{.1em}dt,&\eee^1&=\fsch(r)^{-1}\hspace{.1em}dr,\\
\eee^2&=r\hspace{.1em}d\vartheta,&\eee^3&=r\sinvt\hspace{.1em}d\varphi,
\end{array}
\right.
\label{SchwrzE}
\intertext{and}
&\fsch(r):=\sqrt{1-\frac{\Rs}{r}},~~~
\Rs:=\frac{\kappa M}{4\pi}\label{fr},
\end{align} 
where $\Rs$ is the Schwarzschild radius.
\end{subequations}
Vierbein matrices providing vierbein forms (\ref{SchwrzE}) in the amphometric space are 
\begin{align*}
\E^a_\mu{dx^\mu}&=[\E_\textrm{Schw}]^a_\bcdot[\E_{pol}]^\bcdot_\mu{dx^\mu}\\
&=\left(
\begin{array}{cccc}
\fsch(r)&0&0&~0\\
0&\fsch(r)^{-1}&0&~0\\
0&0&1&~0\\
0&0&0&~1
\end{array}
\right)
\left(
\begin{array}{crrr}
e^{i\pi\theta}&0~~~~&0~~~~&0~~\\
0&\cos{\varphi}\sin{\vartheta}&\sin{\varphi}\sin{\vartheta}&\cos{\vartheta}\\
0&r\cos{\varphi}\cos{\vartheta}&r\sin{\varphi}\cos{\vartheta}&-r\sin{\vartheta}\\
0&-r\sin{\varphi}\sin{\vartheta}&r\cos{\varphi}\sin{\vartheta}&0~~
\end{array}
\right)\left(
\begin{array}{c}
dt\\dx\\dy\\dz
\end{array}
\right).
\end{align*}
We consider Green's function of the vierbein  $\E_\mu^a(\bm{x})$: a $C\hspace{-.2em}M$-Fourier transformation of Green's function in the configuration space  provides that  in the momentum space.
Green's function with the coordinate vector $dp^a=\(dE,dp,d\vartheta,d\varphi\)$ is 
\begin{align*}
\widetilde{G}^a_\E(E_p,|\vec{\tp}|)=&
\int_{-\infty}^{\infty}dt\hspace{.1em}e^{i(E_p+i\delta_0)|t|/\hbar}
\int_0^{2\pi}d\varphi\int_{-1}^1d\hspace{-.1em}\cos{\hspace{-.1em}\vartheta}\int_0^{\infty}dr\hspace{.2em}
e^{-i\vec{r}\cdot\vec{\tp}/\hbar}
\hspace{.5em}r^2\(\fsch(r),\fsch(r)^{-1},1,1\),
\end{align*}
where $\vec{r}\cdot\vec{p}=p\hspace{.2em}r\cos{\vartheta}$, $0<\delta_0\ll\hspace{.1em}E_p)$ and $E_p>0$ is an energy of a \emph{graviton}.
We obtain Green's functions of the vierbeins after the Fourier transformation as
\begin{align*}
\frac{\widetilde{G}^0_\E\(E_p,\bar{p}\)}{2\pi\hbar}&=
\frac{{\pi^2\Rs^3}}{4}\(\frac{
-\cos{\hspace{-.1em}\bar{p}}\hspace{.2em}J_1\hspace{-.2em}\(\bar{p}\)+
\sin{\hspace{-.1em}\bar{p}}\hspace{.2em}Y_1\hspace{-.2em}\(\bar{p}\)}
{\bar{p}^2}-
2i\frac{\sin{\hspace{-.1em}\hspace{.1em}\bar{p}}\hspace{.2em}J_1\hspace{-.2em}\(\bar{p}\)}{\bar{p}^2}\)
\hspace{.1em}\delta\hspace{-.1em}E_p,\\
&+-\frac{\pi\Rs^3}{2}
\(\frac{1}{\bar{p}^{2}}+\frac{\pi}{4\bar{p}}-\frac{1}{2}
\(\log{\frac{\bar{p}}{2}}-\frac{1}{6}+\gamma^{~}_E+i\pi\)
+{\cal O}\(\bar{p}^{1}\)\)\hspace{.1em}\delta\hspace{-.1em}E_p,\\
\frac{\widetilde{G}^1_\E\(E_p,\bar{p}\)}{2\pi\hbar}&=
\frac{\pi^2\Rs^3}{4}\left\{
\cos{\hspace{-.1em}\bar{p}}\hspace{.1em}\(2\frac{J_0(\bar{p})+Y_1(\bar{p})}{\bar{p}}-\frac{J_1(\bar{p})}{\bar{p}^2}\)-
\sin{\hspace{-.1em}\bar{p}}\(2\frac{J_1(\bar{p})+Y_0(\bar{p})}{\bar{p}}-\frac{Y_1(\bar{p})}{\bar{p}^2}\)
\right.\\&\left.\hspace{1em}
+2i\(\sin{\hspace{-.1em}\bar{p}}\(\frac{J_1(\bar{p})}{\bar{p}^2}-\frac{2J_0(\bar{p})}{\bar{p}}\)-
\cos{\hspace{-.1em}\bar{p}}\hspace{.1em}\hspace{.2em}\frac{2J_1(\bar{p})}{\bar{p}}\)
\right\}
\hspace{.1em}\delta\hspace{-.1em}E_p,\\
&=\frac{\pi\Rs^3}{2}
\(
\frac{1}{\bar{p}^{2}}+\frac{3\pi}{4\bar{p}}-\frac{5}{2}
\(\log{\frac{\bar{p}}{2}}+\gamma^{~}_E-+i\pi+\frac{7}{30}\)
+{\cal O}\(\bar{p}^{1}\)\)\hspace{.1em}\delta\hspace{-.1em}E_p,,\\
\frac{\widetilde{G}^2_\E\(E_p,\bar{p}\)}{2\pi\hbar}&=\frac{\widetilde{G}^3_\E\(E_p,\bar{p}\)}{2\pi\hbar}=
\pi\hbar\hspace{.2em}\delta^2{\hspace{-.2em}E\bar{p}},
\intertext{where}
\bar{p}&:=\frac{{\Rs}}{2\hbar}|\vec{\tp}|,~~
\delta^2{\hspace{-.2em}E\bar{p}}
=2\pi\hbar^2\frac{\delta(|\vec{\tp}|)}{|\vec{\tp}|^2}\delta{E_p}
=\frac{\pi\Rs^3}{4\hbar}\frac{\delta(\bar{p})}{\bar{p}^2}\delta{E_p},~~
\delta{E_p}:=\left.\hspace{.1em}\delta(E_p+i\delta_0)\right|_{\delta_0\rightarrow0},
\end{align*}
$ \gamma^{~}_E$ is a Euler's constant.
$J_i(\bullet)$ and $Y_i(\bullet)$ $(i=1,2)$ are Bessel functions of the first and the second kind, respectively.
We note that $\bar{p}$ is dimensionless momentum variable.

Corresponding spin-connection forms 
$
[\omega\E]_c^{~ab}:=\cG\hspace{.2em}\omega_\mu^{~ab}[\E^{-1}]^\mu_c
$
 are 
\begin{align*}
[\omega\E]^{~01}_0&=-[\omega\E]^{~10}_0=-\frac{{\Rs}}{2\fsch(r)r^2},\hspace{2em}
[\omega\E]^{~12}_2=-[\omega\E]^{~21}_2=\frac{\fsch(r)}{r},\\
[\omega\E]^{~13}_3&=-[\omega\E]^{~31}_3=\frac{\fsch(r)}{r},\hspace{2.9em}
[\omega\E]^{~23}_3=-[\omega\E]^{~32}_3=\frac{\cos\vartheta}{r\sin{\vartheta}},
\end{align*}
and otherwise zero. 

A Fourier transformation provides Green's function in the momentum space such that:
\begin{align*}
\frac{\left[\widetilde{G}_{\omega}(E_p,\bar{p})\right]^{~10}_0}{2\pi\hbar}&=-\frac{\pi^2\Rs^2}{2}\frac{
\sin{\hspace{-.1em}\bar{p}}\hspace{.2em}Y_0(\bar{p})-
\(\cos{\hspace{-.1em}\bar{p}}\hspace{.1em}-2i\sin{\hspace{-.1em}\bar{p}}\)J_0(\bar{p})
}{\bar{p}}\hspace{.1em}\delta\hspace{-.1em}E_p,\\
&=\frac{\pi\Rs^2}{2}\(
\frac{\pi}{\bar{p}}-2\(\log{\frac{\bar{p}}{2}}+\gamma^{~}_E+i\pi\)+{\cal O}\(\bar{p}^1\)
\)\hspace{.1em}\delta\hspace{-.1em}E_p,\\
\frac{\left[\widetilde{G}_{\omega}(E_p,\bar{p})\right]^{~12}_2}{2\pi\hbar}&=
\frac{\left[\widetilde{G}_{\omega}(E_p,\bar{p})\right]^{~13}_3}{2\pi\hbar},\\
&=\frac{\pi^2\Rs^2}{2}\frac{
\sin{\hspace{-.1em}\bar{p}}\(Y_0(\bar{p})-J_1(\bar{p})-2iJ_0(\bar{p})\)-
\cos{\hspace{-.1em}\bar{p}}\hspace{.1em}\(Y_1(\bar{p})+J_0(\bar{p})-2iJ_1(\bar{p})\)
}{\bar{p}}\hspace{.1em}\delta\hspace{-.1em}E_p,\\
&=
\frac{\pi\Rs^2}{2}\(
\frac{2}{\bar{p}^2}-\frac{\pi}{\bar{p}}
+\log{\frac{\bar{p}}{2}}+\gamma^{~}_E-\frac{1}{2}-i\pi
+{\cal O}\(\bar{p}^1\)\)\hspace{.1em}\delta\hspace{-.1em}E_p,\\
\frac{\left[\widetilde{G}_{\omega}(E_p,\bar{p})\right]^{~23}_3}{2\pi\hbar}&=0.
\end{align*}
A behavior of Green's functions with $\delta{E}=1$ are drawn in Figure \ref{Greens}.
An asymptotic behavior of Green's functions is $\widetilde{G}_{\E}(1,\infty)=\widetilde{G}_{\omega}(1,\infty)=0$.
\QED
\end{example}

%
%
\section{Summary}\label{Summary}
The author proposed a non-perturbative canonical quantisation of general relativity in the Heisenberg picture in the previous work\cite{doi:10.1140/epjp/s13360-021-01463-3}.
For concrete quantum mechanical calculations of gravitational phenomena, we need a perturbative expansion of physical processes, such as scattering processes concerning gravitons.
This report provided wave functions and Green's functions (propagators) of gravitational fields of the vierbein and the spin-connection fields.
They are essential ingredients of perturbed calculations using the diagrammatic method.
The existence of Green’s function for the Laplace--Beltrami operator in curved space and with an indefinite metric is ensured owing to the Hodge harmonic analysis extended to the amphometric space.
The analyticity of Green's function is naturally fixed by the amphometric, which is consistent with keeping a causality as expected from physics.

The standard method of a perturbative quantum field theory in a flat Minkowski space utilizes the momentum space for calculations; the momentum space consists of vectors Fourier transferred from the configurations space.
An integration kernel of the Fourier transformation is a solution of the wave equation; thus, the Fourier transformation corresponds to an expansion of solutions of the exact (interacting) non-linear equation of motion by the fundamental solutions of free (non-interacting) fields.
In curved space-time, the physical interpretation of the momentum space is not straightforward, especially for gravitational fields of the vierbein and spin-connection forms.
We note that the general relativity standard linearization method with the weak-field approximation differs from that in the quantum Yang-Mills theory. 

This report proposed a novel definition of the momentum space in curved space-time.
A vector in the configuration space is transferred to that in the momentum space through two steps, the Fourier(-Laplace)-transformation and the $C\hspace{-.2em}M$ transformation.
The spin-connection field owns the information on the curved space-time in the configuration space. The $C\hspace{-.2em}M$-Fourier(--Laplace) transformed vierbein, and spin-connection fields are defined in the momentum space with the flat metric.
This report also proposed linearising the Einstein equation as a free field consistent with that for the Yang-Mills gauge field.  
The proposed linearisation does not utilize the weak-field approximation; thus, the method applies to highly caved space-time.

We gave two examples of exact Green's functions of gravitational fields: the plane wave solution and the Schwarzschild solution.
After the perturbative quantisation of gravitational fields, we can calculate the gravitational effects using the diagrammatic method to the quantum Yang--Mills theory owing to Green's functions in this report. 
For example, the Hawking radiation from black holes and a gravitational correction of the muon (and also electron) anomalous magnetic moment are possible applications.
 
%
%
\vskip 3mm
\section*{Acknowledgements}
I would like to thank Dr Y$.$ Sugiyama, Prof$.$ J$.$ Fujimoto and Prof$.$ T$.$ Ueda for their continuous encouragement and fruitful discussions.

\bibliographystyle{unsrt} 
\bibliography{Fourier-GR-PRD}      
\end{document}